\documentclass{article}
\usepackage[utf8]{inputenc}

\usepackage{booktabs} 
\usepackage[ruled,vlined]{algorithm2e}
\usepackage{amsfonts}
\usepackage{amsmath}
\usepackage{amssymb}
\usepackage{amsthm}
\usepackage{authblk}
\usepackage{comment}
\usepackage{float}
\usepackage{fullpage}
\usepackage{graphicx}
\usepackage{hyperref}
\usepackage{mathtools}
\usepackage{thm-restate}

\theoremstyle{definition}
\newtheorem{assumption}{Assumption}
  
\newtheorem{definition}{Definition}
\newtheorem{lemma}{Lemma}

\newtheorem{observation}{Observation}
\newtheorem{theorem}{Theorem}
\newtheorem*{theorem*}{Theorem}
\newtheorem*{remark}{Remark}

\hypersetup{colorlinks=true,linkcolor=blue,citecolor=blue,urlcolor=blue}
\usepackage[capitalize]{cleveref}
\usepackage[multiple]{footmisc}

\newcommand{\LL}{{\mathcal{L}}}

\newcommand*\samethanks[1][\value{footnote}]{\footnotemark[#1]}

\title{Beyond Pigouvian Taxes: A Worst Case Analysis}

\author[1]{Moshe Babaioff}
\author[2]{Ruty Mundel\thanks{Supported by the European Research Council (ERC) under the European Union’s Horizon 2020 research and innovation programme (grant agreement No 740282).}\thanks{Supported by the Israeli Smart Transportation Research Center (ISTRC).}}
\author[2]{Noam Nisan\samethanks[1]}
\affil[1]{Microsoft Research}
\affil[2]{Hebrew University of Jerusalem, Israel}

\begin{document}
	
	\maketitle
	\begin{abstract}
		In the early 20\textsuperscript{th} century, Pigou observed that imposing a marginal cost tax on the usage of a public good induces a socially efficient level of use as an equilibrium. Unfortunately, such a ``Pigouvian" tax may also induce other, socially inefficient, equilibria. We observe that this social inefficiency may be unbounded, and study whether alternative tax structures may lead to milder losses in the worst case, i.e. to a lower price of anarchy. We show that no tax structure leads to bounded losses in the worst case. However, we do find a tax scheme that has a lower price of anarchy than the Pigouvian tax, obtaining tight lower and upper bounds in terms of a crucial parameter that we identify. We generalize our results to various scenarios that each offers an alternative to the use of a public road by private cars, such as ride sharing, or using a bus or a train.	
	\end{abstract}
	\section{Introduction}
	\label{sec:intro}
	
	This paper studies the design of taxes intended to overcome the ``tragedy of the commons"  in the use of a shared public resource. We consider a situation with a public good and a population of users where each of them may choose to use either the public good or to use a more costly alternative instead. Users of the public good ``congest” it, causing a negative externality to all others so the social planner wishes to reduce the use of the public good {to the socially optimal level,} by levying taxes on such use. Examples of this scenario abound, e.g.:
	
	\begin{itemize}
		\item Road tolls. Commuters may either use the road by driving their car to work or take public transportation. There is an inconvenience for taking public transportation, but driving a car increases the congestion on the road leading to increased commute times for everyone. Tolls on the road may incentivize taking public transportation.
		\item Carbon taxes. People that use carbon-based energy sources may instead opt to use more expensive renewable energy sources (e.g. an electric car vs. a petrol based car). The carbon-based energy sources have an externality in terms of pollution and global warming. Carbon taxes incentivize a switch to renewable energy sources.
	\end{itemize}
	
	\subsection{Public-Good Congestion Games}
	Here is a very basic model for these situations. We capture the demand for the public good by the \emph{individual cost function} $\alpha : [0,1]  \rightarrow \mathbb{R}_{\ge0}$, where $\alpha(q)$ is the price at which fraction $q$ of the population chooses to use the public good.\footnote{The individual cost function is also the inverse function of the demand function.
		For every $q\in [0,1]$,  $q$ fraction  of the population  have disutility at least $\alpha(q)$ for not using the public good (and using the alternative instead).}
	We capture the negative externality imposed by the use of the public good by a non-decreasing \emph{externality function} $l: [0,1]  \rightarrow \mathbb{R}_{\ge0}$, where $l(q)$
	is the negative externality	when fraction $q$ of the population uses the public good. We assume that this externality is borne by every member of the population\footnote{We also study a variant where only those that use the public good incur the cost of $l(q)$, see Section \ref{sec:train}.}. 
	{The \emph{Social Cost} when fraction $q$ of the population uses the public good is thus $SC(q)= l(q) + \int_q^1 \alpha(x)dx$.}
	In this model, if the public good is offered for free then everyone will use the public good so the total social cost will be $l(1)$, which may be tragically high relative to the socially optimal usage level, the one that minimizes this social cost.	Already in 1920, Pigou \cite{Pigou1920} suggested a taxation scheme that will result in an efficient equilibrium: tax each user of the public good the marginal externality that she imposes on society, i.e. the taxation function is given by $t(q)=l'(q)$, where $q$ is the fraction of the population using the public good (and $l'(\cdot)$ is the derivative of $l$ with respect to $q$). Using first order conditions to minimize the social cost $l(q) + \int_q^1 \alpha(x)dx$ we get that for an optimal $q$ we have that $\alpha(q)=l'(q)$ which is indeed obtained as the equilibrium with this Pigouvian tax. 
	
	The starting point of this paper notes that this analysis only guarantees the existence of an efficient equilibrium\footnote{Under appropriate continuity assumptions.}, but it ignores the possibility of other, non-efficient equilibria. It turns out that  existence of bad equilibria is indeed possible. Figure \ref{fig:IntroductionExample} shows an example where there are three equilibrium points with $\alpha(q)=l'(q)$, at least one of which has a tragically high social cost, while another has very low social cost. 
	Furthermore, as we demonstrate in Appendix \ref{appendix:simulations}, it is indeed possible that natural, best-reply, dynamics will lead to this socially-costly equilibrium.
	
	\begin{figure}[H]
		\centering
		\includegraphics[width=5in]{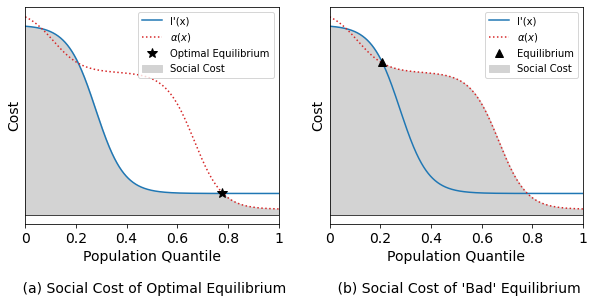}
		\caption{An example in which the Pigouvian tax induces both the optimal point as an equilibrium (a),  as well as a 'bad' equilibrium (b). For each  population quantile $q\in[0,1]$, the total social cost is given by the shaded area, combining the area under the externality derivative $l'$ (total externality) to its left, and the area under the individual cost function $\alpha$ to its right.}
		\label{fig:IntroductionExample}
	\end{figure}
	
	The example in Figure \ref{fig:IntroductionExample} demonstrates that there may exist non-optimal equilibrium points, but how bad can they be? In other words, what is the ``price of anarchy" in this setting? Taking the example of Figure \ref{fig:IntroductionExample} to its limits, we observe that the ratio between the  social cost of an equilibrium and the optimal social cost may be unbounded.
	\begin{observation}
		The price of anarchy of the Pigouvian tax is unbounded.
	\end{observation}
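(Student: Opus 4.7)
The plan is to exhibit a one-parameter family of instances in which the Pigouvian tax admits both a ``good'' equilibrium with bounded social cost and a ``bad'' equilibrium whose social cost grows without bound with the parameter, so that their ratio witnesses an unbounded price of anarchy.

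For the construction, I would take $\alpha \equiv 1$ on $[0,1]$ and define $l$ by setting $l'(q)=1$ on $[0,\tfrac14]\cup[\tfrac34,1]$ and $l'(q)=M$ on $(\tfrac14,\tfrac34)$, for a large parameter $M$. This keeps $\alpha$ non-increasing and $l$ non-decreasing, as required by the model; if strict continuity of $l'$ is wanted, one can mollify the two jumps over arbitrarily thin transition intervals, which affects none of the estimates below. The Pigouvian fixed-point condition $\alpha(q^*)=l'(q^*)$ then holds for every $q^*\in[0,\tfrac14]\cup[\tfrac34,1]$, and each such $q^*$ is a true best-response equilibrium: with tax $t=l'(q^*)=1$, every user is indifferent between paying the tax and incurring disutility $\alpha(\cdot)=1$ from the alternative, so no unilateral deviation is strictly profitable.

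It remains to compute $SC(q)=l(q)+\int_q^1\alpha(x)\,dx$ on the two equilibrium plateaus. Any $q_g\in[0,\tfrac14]$ yields $SC(q_g)=q_g+(1-q_g)=1$, so the optimal social cost is at most $1$. Any $q_b\in[\tfrac34,1]$ yields $SC(q_b)=l(q_b)+(1-q_b)=\tfrac12+\tfrac{M}{2}$, since $l$ accumulates $M/2$ on the middle interval. Sending $M\to\infty$ gives price of anarchy at least $(M+1)/2$, which is unbounded.

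The only subtle step is the verification that $q_b$ is truly a game-theoretic equilibrium and not merely a formal fixed point of $\alpha=l'$; this reduces to the single-crossing property of $q\mapsto\alpha(q)-t(q_b)$, which holds trivially here because $\alpha$ is identically equal to $l'(q_b)=1$ on the relevant plateaus. The rest is direct computation, so I do not expect any further obstacle.
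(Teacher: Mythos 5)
Your construction is correct and establishes the observation. Every point $q$ with $l'(q)=1$ satisfies the paper's equilibrium characterization $\alpha(q^-)\ge t(q)\ge\alpha(q^+)$ (here $1\ge 1\ge 1$), the social cost on the low plateau is exactly $1$ (which is also the global minimum of $SC$, since $SC$ is $\ge 1$ everywhere in this instance), and the social cost on the upper plateau is $(M+1)/2$, so the price of anarchy is $(M+1)/2\to\infty$. The only nitpick is that under the paper's left-derivative convention the point $q=3/4$ itself has $t(3/4)=M$ and is not an equilibrium, so you should take $q_b\in(3/4,1]$; this changes nothing. Your route differs from the paper's in an instructive way: the paper's instance (Figure 1, made quantitative in its Appendix B) uses an externality derivative that starts high on a tiny prefix $[0,\epsilon]$ and then drops to $1/H$, paired with a decreasing step function $\alpha$ that crosses $l'$ three times; there the \emph{good} equilibrium (near full usage) has vanishing cost $\approx 1/H$ while the bad one has cost $\approx 1$, so unboundedness comes from the denominator shrinking. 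In your instance the optimum stays fixed at $1$ and the \emph{bad} equilibrium's cost blows up because $l$ accumulates mass $M/2$ on the middle interval; the constant $\alpha$ makes the equilibrium verification entirely trivial at the price of producing a continuum of equilibria rather than isolated ones. Both exploit the same underlying mechanism -- non-monotonicity of $l'$ creating multiple crossings with $\alpha$ -- and both are equally valid witnesses of unboundedness; the paper's version has the added benefit of matching the tight $\Theta(H)$ bound for the class $\LL_H$, whereas yours gives $(M+1)/2$ within $\LL_M$, which suffices for the qualitative statement.
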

	
	Given this failure of the Pigouvian tax, the question that we ask is whether some other taxation scheme $t(\cdot)$ can guarantee a reasonably good social cost in all equilibria points. In other words, yield a better price of anarchy. Like the Pigouvian tax, the desired taxing scheme will charge a price according to the current load in the system. 
	Our main result is negative, showing no tax function yields a small price of anarchy. To formulate this we need some  definitions: 
	
	\vspace{0.5cm}
	\textbf{Definitions}
	\begin{itemize}
		\item For a fixed externality function $l:[0,1]\longrightarrow\mathbb{R}_{\ge0}$, a fixed tax function $t:[0,1]\longrightarrow\mathbb{R}_{\ge0}$  and a fixed individual cost function $\alpha:[0,1]\longrightarrow\mathbb{R}_{\ge0}$, the \emph{Price of Anarchy} $PoA(l,t,\alpha)$ is the social cost of the worst equilibrium\footnote{If no equilibrium exists then we define the price of anarchy to be $1$. An equilibrium always exists if $t$ is continuous. None of our results rely on non-existence of equilibrium.} divided by the cost of the social optimum, in the game induced by $l$, $t$, and $\alpha$.
		\item For a given fixed externality function $l:[0,1]\longrightarrow\mathbb{R}_{\ge0}$, the \emph{Taxed Price of Anarchy for $l$}, $TPoA(l)$, is the price of anarchy obtained for the worst individual cost curve $\alpha$ under the best taxation scheme $t(\cdot)$:
		$TPoA(l) = \inf_{t}\sup_{\alpha} PoA(l,t,\alpha)$.
		\item The \emph{Taxed Price of Anarchy of a family (set) of externality functions $\LL=\{l(\cdot)\}$}, is the taxed price of anarchy of the worst externality function in the family: $TPoA(\LL) = \sup_{l \in \LL} TPoA(l)$.
	\end{itemize}
	\vspace{0.5cm}
	
	Note that under the definition of $TPoA(l)$  the tax function $t(\cdot)=t_{l}(\cdot)$ may depend on the externality function $l(\cdot)$ as we assume it is public knowledge but may not depend on the individual cost function $\alpha(\cdot)$ which we assume is unknown to the planner and possibly varies over time.\footnote{It is not difficult to see that the problem becomes trivial if the demand is fixed and the tax function may depend on it.} This basic modeling choice, which follows Pigou, makes it impossible to analyze our problem by encoding it as a routing game, even when the number of agents is finite. This is so since this would require explicitly encoding $\alpha$ within the graph, while we assume $\alpha$ is unknown to the tax designer.
	
	Before presenting a full version of our main negative result, we state a qualitative version, showing that no tax function has a bounded price of anarchy:
	
	\vspace*{0.5cm}
	\textbf{Theorem (``Lower Bound", qualitative version)}: The taxed-price-of-anarchy of the family of all monotone non-decreasing externality functions is unbounded.
	\vspace*{0.5cm}
	
	Thus, the Pigouvian tax is not the only one susceptible to the problem of agents being trapped in a bad equilibrium given that tax, but rather, \emph{every} tax suffers from that problem when demand is adversarial.
	Given this negative result we look for conditions under which the taxed price of anarchy can be bounded using some carefully chosen tax scheme.
	
	In the classic special case of strictly convex externality functions (i.e. increasing $l'(\cdot)$) the Pigouvian tax ensures that there is only a single equilibrium point which is thus optimal and so the price of anarchy is 1.  Convexity of $l$ is, however, a strong assumption that may not always hold, e.g. when the externality function $l$ has a sigmoid-like shape in cases where the marginal externality saturates at a certain load. We extend the classic convex case to externality functions $l(\cdot)$ that have derivative $l'(\cdot)$ that is ``approximately monotone" and show that the taxed price of anarchy degrades linearly in the monotonicity approximation level. See Theorem \ref{thm:bus-gamma}.
	
	Our main results concern general (far from convex) externality functions for which we present tight upper and lower bounds for the taxed price of anarchy as a function of a crucial parameter which we identify. This parameter is the maximal ratio between derivative of $l$ at any two points, which we denote by $H$.
	
	\begin{definition}
		For a fixed constant $H>1$, let $\LL_H$ be the \emph{family of functions with bounded derivative ratio $H$:}
		the family of externality functions that each satisfies  $l'(q_1)/l'(q_2) \le H$ for every $0 \le q_1, q_2 \le 1$.
	\end{definition}
	
	We first prove an upper bound on the taxed price of anarchy of  $\LL_H$:
	
	\vspace*{0.5cm}
	\textbf{Theorem (``Upper Bound")}: For $\LL_H$,  the {family of functions with bounded derivative ratio $H>1$, it holds that the taxed price of anarchy of $\LL_H$ is at most $\sqrt H$, i.e. $TPoA(\LL_{H})\le\sqrt{H}$.
	\vspace{0.5cm}
		
	This theorem is obtained with a simple constant tax that does not depend on the load and is set to be the geometric mean of the maximum and minimum values of $l’$. In contrast we show in Appendix \ref{appendix:PigouBound} that the price of anarchy of the Pigouvian tax is much worse, $\Theta(H)$. This suggests that unless the social planner can influence the equilibrium selection process, the Pigouvian tax  might not be the right choice from a worst-case perspective.
	The quantitative version of our main theorem shows a tight lower bound on the taxed price of anarchy of $\LL_H$:
		
	\vspace*{0.5cm}
	\textbf{Theorem (``Lower Bound", quantitative version)}: 
	For $\LL_H$,  the {family of functions with bounded derivative ratio $H>1$, it holds that the taxed price of anarchy of $\LL_H$ is at least $\sqrt H$, i.e. $TPoA(\LL_{H})\ge\sqrt{H}$. 
	\vspace{0.5cm}
			
	Combining these two results we immediately derive that the taxed price of anarchy of $\LL_H$ is exactly $\sqrt H$, i.e. $TPoA(\LL_{H})=\sqrt{H}$.
	
	We view the main message of our results to be that the Pigouvian tax does not ensure an even approximately efficient use of the public resource, and even though no other taxation scheme is good either, in certain senses, a different taxation scheme may have an advantage over the Pigouvian tax. These results were shown in a model that is obviously idealized. The next subsection considers a variety of more realistic generalizations and shows that the main conclusions  hold for these generalizations as well.
	
	\subsection{Ride Sharing Games}
			
	In the second part of this paper, we look at a generalization of the basic model that captures transportation problems with private cars and ride-sharing options of various types. Here are some natural ride-sharing settings that the model captures:
			
	\begin{itemize}
		\item {\bf Bus}: The basic model studied above may be viewed as capturing the situation where a commuter may take the bus instead of a private car. The public good in this case is the road, and the taxes are tolls. A bus-ride causes only negligible increased congestion on the road compared to a car ride. However, bus-riders also incur the cost of congestion. Bus passengers do not pay tolls.
		\item {\bf Carpools with tolls}:
		In this model, $K$ passengers may share a car, where $K$ is some fixed constant {capturing the capacity of a shared car}. The load on the road when $q$ fraction of the population take a private car and the other $1-q$ fraction carpool is $q + (1-q)/K$. In this model we assume that carpoolers equally split the toll.
		\item {\bf Carpools without tolls}:
		This model is similar to the previous one, only that we assume that carpools are toll-free.
	\end{itemize}
			
	One may also think of various intermediate models, e.g. where carpoolers receive a discount on tolls. In Section \ref{sec:RSG-def} we define a parameterized common generalization of all of these models which we call \emph{Ride Sharing Games}. We prove analogs of our public-good congestion model results for this general class of settings. The main takeaway is that no toll function provides a good price of anarchy in any of these models and that from a worst-case perspective in all of these models there are toll functions that do better than the Pigouvian tax.
			
	In our models so far all users suffers from the externalities of the usage of the public good, i.e. of the road. One can alternatively consider a model in which the population that is not using the public good does not suffer from the externality imposed by public-good users. A simple example for that setting is a public road for which there is a parallel train track, and each person needs to decide whether to ride a private car or take the train. The train creates no congestion at all on the road, and train passengers do not suffer from the road delays at all (travel time is independent of number of cars on the road). We show that also for this model, no toll function provides a good price of anarchy.
			
	\vspace{0.2in}
	The rest of the paper is structured as follows: We first present related work in Section \ref{sec:RelatedWork}. Section \ref{sec:Game-def} formally defines the public-good congestion model. We phrase and prove our results for this model in Section \ref{sec:Results}.  Section \ref{sec:RSG-def} describes the ride sharing model and extend our results to this general model. Finally, in Section \ref{sec:train} we consider the case that the negative externality imposed by the use of the public good is only suffered by the public-good users, and present similar impossibility results for this model as well.
			
	\subsection{Related Work}
	\label{sec:RelatedWork}
	
	Marginal cost pricing for public goods was first proposed by Pigou \cite{Pigou1920}. 
	Samuelson \cite{Samuelson54, Samuelson55} has mathematically formulated the theory of public goods. Baumol and Oats \cite{Baumol88} and Laffont \cite{Laffont88} provides a good introduction to the topic as well as further references.
	
	The concept of price of anarchy analysis was introduced by Koutsoupias and Papadimitriou \cite{Koutsoupias-CP99}, and used by Roughgarden and Tardos \cite{RoughgardenT02} in the context of selfish routing games. Since then this type of analysis has drawn much research attention in routing  games as well as a variety of other games (see e.g. \cite{Roughgarden05, Nisan-R-T-V07} as well as references within). The complementary notion of price of stability was introduced in \cite{Anshelevich-AD-JK-ET08}.
						
	Traffic congestion problems were first formulated and analyzed in the 1950s by Wardrop \cite{Wardrop52} and Beckman et al.\cite{Beckmann-CN-CW55}. A general model in this area is the \emph{Congestion Game} model defined by Rosenthal \cite{Rosenthal72} in 1972. As mentioned above, much of the early work on price of anarchy considered congestion games.
	Some work has also been done regarding tolls in routing games, starting with \cite{Cole-YD-TM06}, where it was shown that when using taxes one can induce the optimal state as an equilibrium. Since then a lot of work has been done on tolls in congestion games \cite{Karakostas-SK04, Fleischer-KJ-MM04, Swamy07, Harks-GS-MS08, Bonifaci-MS-GS11,Bilo-CV19}.
		
	Our model is different from the literature on tolls in congestion games in several respects: first, our public-good model does not fall into the category of routing games, since in our model \emph{everyone} suffers the externalities resulting from the usage of the public good, even those who do not use it. Additionally, the crux of our model is that we treat the demand function as unknown, which is conceptually different from the implicit fixed demand assumption in congestion models.
			
	In transportation research, technologies advancements as well as the option to combine several aspects to combat congestion has been researched extensively, both empirically, e.g. \cite{varaiya, eliasson} and theoretically, e.g. \cite{cramton2017, cramton, hall, michales, Mehr}.
	
	\section{The Model}
	\label{sec:Game-def}
			
	A \emph{Public-Good Congestion Game (PGCG)} includes a population of agents where each of them may choose to use either a public good or a more costly alternative instead. Users of the public good ``congest” it, causing a negative externality to all others. We assume a large population of agents, with each agent by itself having an infinitesimal effect on the system. This is formalized by modeling the population as a continuum, who have a total mass $1$. We assume that the amount that each agent pays for the alternative might vary. We denote by $x\in[0,1]$ the fraction of the population that use the public good while $1-x$ is the fraction of the population that use the alternative.
	The externality of using the public good is a function of the total mass of population using it. The social planner can impose a tax on the agents that use the public good, in order to affect agents chosen strategy, aiming to minimize the social cost.

	Formally, a Public-Good Congestion Game $PGCG(l,\alpha,t)$ is defined by three functions --  an externality function $l$, a tax function $t$ and an individual cost function $\alpha$ as follows:
	\begin{definition}[Externality Function $l$]
		We use $l:[0,1]\longrightarrow\mathbb{R}_{\ge0}$ to denote an \emph{externality function}, with $l(x)$ being the externality, in monetary terms, experienced by every individual,  when the mass of public-good users is $x$.
		\begin{assumption}
			We assume that a externality function $l(\cdot)$ satisfies the following:
			\begin{itemize}
				\item $l$ is non-negative
				\item $l$ is non-decreasing
				\item $l$ is continuous and left differentiable with a left derivative function denoted by $l'(\cdot)$
			\end{itemize}
		\end{assumption}
	\end{definition}
	
	\begin{definition}[Tax Function $t$]
		We use $t:[0,1]\longrightarrow\mathbb{R}_{\ge0}$ to denote a \emph{tax function}, with $t(x)$ being the tax that every public-good user pays when the mass of public-good users is $x$.
	\end{definition}
	Note that we do not make any assumptions on $t(\cdot)$ -- it can be any function (not necessarily increasing or continuous). Also note that we define the tax as non-negative (which means the social planner does not pay agents to use the public good), as we show in Appendix \ref{appendix:Subsidy} that a social planner that is aiming to minimize the social cost would never choose to use subsidies.
			
	\begin{definition}[Individual Cost Function $\alpha$]
		We use $\alpha:[0,1]\longrightarrow\mathbb{R}_{\ge0}$ to denote an \emph{individual cost function} with $\alpha(x)$ being the disutility, in monetary terms, an individual that lies on the $x$ percentile of the population suffers by refraining from using the public good. I.e. for every $x\in [0,1]$,  $x$ fraction  of the population  have disutility at least $\alpha(x)$ for not using the public good (and using the alternative instead).  
		\begin{assumption}
			We assume that an individual cost function $\alpha(\cdot)$ satisfies the following:
			\begin{itemize}
				\item $\alpha$ is non-negative
				\item $\alpha$ is non-increasing
			\end{itemize}
		\end{assumption}
		As $\alpha(\cdot)$ is a monotone function, it is integrable and additionally, for every $x\in[0,1]$ it has a limit from the left as well as a limit from the right.  
		For a given $x$, we use $\alpha(x^{-})=\inf\{\alpha(t)\text{ }|\text{ }t<x\}$ to denote the limit from the left at $x$, and $\alpha(x^{+})=\sup\{\alpha(t)\text{ }|\text{ }t>x\}$ to denote the limit from the right at $x$.
	\end{definition}
			
	Note our model assumes the externality, individual cost and tax functions are all measured in the same units of money. We  assume agents have the same value for money, and the differences in preferences between different agents are reflected in their individual cost function.
	For convenience we will denote $PGCG(l,\alpha,t)$ by $G(l,\alpha,t)$.
			
	\begin{definition}[Personal Cost $c_{A}$]
		We use $c_{A}:[0,1]\times[0,1]\longrightarrow\mathbb{R}_{\ge0}$ to denote a \emph{personal cost function}, with $c_{A}(i,x)$ being the personal cost (or disutility) of an agent in the $i\in[0,1]$ percentile choosing action $A\in \{PG, ALT\}$ (using the public good or using the alternative), when $x$ fraction of the population use the public good.
		
		This means that the personal cost can be written as follows:
		\begin{gather*}
			c_{PG}(i,x) = l(x) + t(x) \quad\quad \text{when $i$ uses public good}\\
			c_{ALT}(i,x) = \alpha(i) + l(x) \quad\quad \text{ when $i$ uses the alternative}\\
		\end{gather*}
		We assume agents are rational: 
		each infinitesimal agent will take the action (use the public good or use the alternative) that has lower personal cost for the agents (breaking ties arbitrarily).
	\end{definition}
			
	It is now possible to define means of analyzing Public-Good Congestion Games, by defining equilibrium points, as well as socially optimal points.
	\begin{definition}[Equilibrium Point $\hat{x}$]
		Given $G(l,\alpha,t)$, $\hat{x}$ is an \emph{equilibrium point} if the following hold:
		\begin{gather*}
			\forall i\le\hat{x}: \quad c_{ALT}(i,\hat{x}) \ge c_{PG}(i,\hat{x})\\
			\forall i>\hat{x}: \quad c_{ALT}(i,\hat{x}) \le c_{PG}(i,\hat{x})\\
		\end{gather*}
		I.e. the agents that choose to use the public good are the $\hat{x}$-fraction of the agents whose cost of the alternative is highest.
		Note that in the case where $\alpha$ and $t$ are continuous, any internal equilibrium point $\hat{x}\in (0,1)$ must satisfy 
		$ c_{ALT}(i,\hat{x}) = c_{PG}(i,\hat{x})$. That is, $\hat{x}$ is such that the agents in the $\hat{x}$ percentile are indifferent between using the public good and the alternative.
				
		We next present a simple characterization of equilibrium points. For that characterization it is convenient to define $\alpha(0^{-})=+\infty$ and $\alpha(1^{+})=-\infty$.
		\begin{observation}
			\label{obs:EqCond}
			Given $G(l,\alpha,t)$, $\hat{x}$ is an equilibrium point if and only if 
			\begin{equation}
				\label{eqn:EqCond}
				\alpha(\hat{x}^{-}) \ge  t(\hat{x}) \ge \alpha(\hat{x}^{+})\\
			\end{equation}
		\end{observation}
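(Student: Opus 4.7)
The plan is to unwind both sides of the claimed equivalence and observe that they reduce to the same pair of pointwise inequalities, modulo monotonicity of $\alpha$. The starting point is the key algebraic cancellation: since $c_{PG}(i,\hat{x}) = l(\hat{x}) + t(\hat{x})$ and $c_{ALT}(i,\hat{x}) = \alpha(i) + l(\hat{x})$ share the common term $l(\hat{x})$, the inequality $c_{ALT}(i,\hat{x}) \ge c_{PG}(i,\hat{x})$ is equivalent to $\alpha(i) \ge t(\hat{x})$, and dually for the opposite direction. So the definition of equilibrium reduces to the conjunction of ``$\alpha(i) \ge t(\hat{x})$ for every $i \le \hat{x}$'' and ``$\alpha(i) \le t(\hat{x})$ for every $i > \hat{x}$''.

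For the forward direction, I would take the infimum of $\alpha(i)$ over $i < \hat{x}$: by the very definition of $\alpha(\hat{x}^{-})$ spelled out in the paper, this infimum equals $\alpha(\hat{x}^{-})$, so the first family of inequalities collapses to $\alpha(\hat{x}^{-}) \ge t(\hat{x})$. Symmetrically, taking the supremum of $\alpha(i)$ over $i > \hat{x}$ gives $t(\hat{x}) \ge \alpha(\hat{x}^{+})$, producing the desired chain. The boundary cases $\hat{x}=0$ and $\hat{x}=1$ are precisely what the conventions $\alpha(0^{-}) = +\infty$ and $\alpha(1^{+}) = -\infty$ are designed to handle: the corresponding half of the chain becomes vacuous.

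For the reverse direction, assuming $\alpha(\hat{x}^{-}) \ge t(\hat{x}) \ge \alpha(\hat{x}^{+})$, monotonicity of $\alpha$ immediately gives $\alpha(i) \ge \alpha(\hat{x}^{-}) \ge t(\hat{x})$ for every $i < \hat{x}$ and $\alpha(i) \le \alpha(\hat{x}^{+}) \le t(\hat{x})$ for every $i > \hat{x}$, recovering the two equilibrium inequalities away from the single index $i=\hat{x}$. The only mildly delicate point is that the definition formally groups $i=\hat{x}$ with the public-good users, while $\alpha(\hat{x})$ may differ from $\alpha(\hat{x}^{-})$ at a jump; in the continuum this is a measure-zero matter that is absorbed by the arbitrary tie-breaking convention built into the model, so it is not a genuine obstacle. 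The whole argument is thus a direct translation plus one application of monotonicity in each direction, and no deeper tool is needed.
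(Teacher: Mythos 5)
Your proposal is correct and follows essentially the same route as the paper: cancel the common $l(\hat{x})$ term, pass to the one-sided limits (which, by monotonicity, coincide with the $\inf$/$\sup$ in the paper's definition of $\alpha(\hat{x}^{\pm})$) for the forward direction, and invoke monotonicity of $\alpha$ for the converse. If anything you are slightly more careful than the paper, which in the reverse direction asserts $\alpha(i)\ge\alpha(\hat{x}^{-})$ for all $i\le\hat{x}$ even though this can fail at the single index $i=\hat{x}$ when $\alpha$ jumps there --- exactly the measure-zero point you flag and correctly dismiss as immaterial in the non-atomic model.
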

		\begin{proof}
			By definition $\hat{x}$ is an equilibrium point if and only if  
			\begin{equation*}
				\forall i\le\hat{x}<j : \quad \alpha(i) + l(\hat{x}) \geq l(\hat{x}) + t(\hat{x})\geq \alpha(j) + l(\hat{x})\\
			\end{equation*}
			As $\alpha$ is non-increasing, it has limits from both left and right at $\hat{x}$, and thus by subtracting $l(\hat{x})$ and taking the limits as $i$ increases to $\hat{x}$ and as $j$ decreases to $\hat{x}$ we get that if the above holds then 
			\begin{equation*}
				\alpha(\hat{x}^{-}) \geq t(\hat{x})\geq \alpha(\hat{x}^{+})\\
			\end{equation*}
			which is simply Equation (\ref{eqn:EqCond}).
			
			To prove the other direction we observe that as $\alpha$ is non-increasing, 
			$\forall i\le\hat{x}<j$ it holds that $\alpha(i)\geq \alpha(\hat{x}^{-})\geq \alpha(\hat{x}^{+})\geq \alpha(j)$.
			Combining this with Equation (\ref{eqn:EqCond}) we get that 
			\begin{equation*}
				\forall i\le\hat{x}<j : \quad \alpha(i) + l(\hat{x}) \geq l(\hat{x}) + t(\hat{x})\geq \alpha(j) + l(\hat{x})\\
			\end{equation*}
			which is equivalent to $\hat{x}$ being an equilibrium point.
		\end{proof}
	\end{definition}
			
	Note that if $t$ is continuous then an equilibrium is guaranteed to exist.
	We denote by $EQ(l,\alpha,t)$ the set of equilibrium points of the game $G(l,\alpha,t)$. When $l,\alpha$ and $t$ are clear from context, we use $EQ$ to denote $EQ(l,\alpha,t)$.
			
	\begin{definition}[Social Cost $SC$]
		We use $SC_{(l,\alpha)}:[0,1]\longrightarrow\mathbb{R}_{\ge0}$ to denote the \emph{Social Cost} (or total disutility) function for the game  $G(l,\alpha,t)$, with $SC_{(l,\alpha)}(x)$ being the total amount of disutility when the $x$ fraction of the population with the highest individual cost values uses the public good in the game $G(l,\alpha,t)$.
		As agents are infinitesimal, the social cost can be given by integrating the personal cost over all $i\in[0,1]$, while omitting the taxes part in the cost, as taxes are paid to the social planner and do not affect the total social cost when considering the planner as part of society:
		\begin{equation}
			\label{eqn:SocialCost}
			SC_{(l,\alpha)}(x) = l(x) + \int_{x}^{1}\alpha(z)dz = l(0) + \int_{0}^{x}l'(z)dz + \int_{x}^{1}\alpha(z)dz
		\end{equation}
		
		When $l$ and $\alpha$ are clear from context we omit them in the notation and denote $SC_{(l,\alpha)}(x)$ by $SC(x)$.
	\end{definition}
	As $\alpha$ is integrable, the function $A(x)=\int_{x}^{1}\alpha(z)dz$ is continuous. $l$ is continuous as well, so the function $SC$ is continuous on the compact set $[0,1]$. This means that the minimum of the function $SC$ is obtained for some $x\in[0,1]$. We will call such a point a social optimal point:
			
	\begin{definition}[Social Optimal Point $x^{*}$]
		Given a game $G(l,\alpha,t)$, $x^{*}$ is called a \emph{social optimal point} if its social cost is minimal:
		\begin{equation}
			x^{*} \in {\arg\min}_{x\in[0,1]}SC_{(l,\alpha)}(x)
		\end{equation}
		Note that there might be multiple social optimal points, but every such point has the same minimal social cost. 
		With a slight abuse of notation we denote that 
		\emph{optimal social cost} by $SC_{(l,\alpha)}(x^{*})$.
	\end{definition}
			
	\begin{definition}[Price of Anarchy $PoA(l,\alpha,t)$]
		Given a game $G(l,\alpha,t)$,
		the \emph{Price of Anarchy $PoA(l,\alpha,t)$} is given by the largest ratio between the social cost of an equilibrium, and the optimal social cost\footnote{If the optimal social cost is zero then the price of anarchy is defined to be infinity, unless every equilibrium has zero social cost, in that case the PoA is defined to be 1. }:
		\begin{equation}
			PoA\left(l,\alpha,t\right) = \frac{\underset{\hat{x}\in EQ(l,\alpha,t)}{\sup}SC_{(l,\alpha)}(\hat{x})}{SC_{(l,\alpha)}(x^{*})}
		\end{equation}
		In the case that the set $EQ(l,\alpha,t)=\emptyset$ we define $PoA\left(l,\alpha,t\right)=1$.
	\end{definition}
			
	Given an externality function $l$, the social planner wishes to find a tax function $t=t_{l}$ (that may depend on $l$) that minimizes the social cost in the worst case over the population's individual cost function, ensuring a good outcome even if the population's individual cost function arbitrarily changes. Hence, we interpret the taxed price of anarchy as a bound ensuring the tax is good in the worst case (over preferences):
			
	\begin{definition}[Taxed Price of Anarchy $TPoA(l)$]
		The \emph{Taxed Price of Anarchy for externality function $l$}, $TPoA(l)$, is the price of anarchy of $l$ with the best tax function $t=t_{l}$ for the worst individual cost function $\alpha$:
		\begin{equation}
			TPoA(l) = \underset{t}{\inf}\text{  }\underset{\alpha}{\sup}\text{  } PoA\left(l,\alpha,t\right)
		\end{equation}
	\end{definition}
			
	\begin{definition}[Taxed Price of Anarchy $TPoA(\LL)$]
		The \emph{Taxed Price of Anarchy of a family (set) of externality functions $\LL=\{l(\cdot)\}$}, $TPoA(\LL)$, is the taxed price of anarchy of the worst externality function in the family:
		\begin{equation}
			TPoA(\LL) = \sup_{l \in \LL} TPoA(l)
		\end{equation}
	\end{definition}
			
	\section{Bounds on TPoA}
	\label{sec:Results}
	Pigou \cite{Pigou1920} has proven that the Pigouvian tax induces an efficient equilibrium. An immediate conclusion is that when the externality derivative is increasing (the externality function is convex), every equilibrium is efficient (since there is a unique equilibrium). For completeness we first present this result and its proof. After presenting this result we first extend it to externality functions that are approximately-monotone and then present our main result - a tight bound on the taxed price of anarchy as a function of the worst latency derivative ratio.
						
	\begin{theorem}[Pigou 1920 \cite{Pigou1920} - Increasing Latency Derivative]\label{thm:increasing}
		Fix $G(l,\alpha,t)$.
		If the externality function $l$ has a (left) derivative $l'$ that is non-decreasing, then $TPoA(l)=1$. Furthermore, for such $l$,
		setting the Pigouvian tax $t(x)=l'(x)$ yields $PoA(l,\alpha,t)=1$ for any individual cost function $\alpha$.
	\end{theorem}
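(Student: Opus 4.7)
The plan is to show the stronger statement that every equilibrium under the Pigouvian tax $t(x)=l'(x)$ is a minimizer of the social cost $SC(x)=l(x)+\int_x^1 \alpha(z)\,dz$; this yields $PoA(l,\alpha,t)=1$ for every $\alpha$, hence $TPoA(l)\le 1$. The matching lower bound $TPoA(l)\ge 1$ is immediate from the definition of $PoA$, so this suffices.

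The key structural observation I would use is that, under the hypothesis that the left derivative $l'$ is non-decreasing, the social cost $SC$ is a convex function on $[0,1]$. For $l$, non-decreasing left derivative together with continuity (which is part of the standing assumption on externality functions) implies convexity of $l$ in the usual sense; for $\int_x^1 \alpha(z)\,dz$, the non-increasing assumption on $\alpha$ makes this mapping convex (its left derivative $-\alpha(x^-)$ is non-decreasing). Summing, $SC$ is convex, and I would invoke the standard first-order optimality characterization for convex functions on an interval: $\hat{x}$ minimizes $SC$ on $[0,1]$ if and only if the left derivative of $SC$ at $\hat{x}$ is at most $0$ and the right derivative at $\hat{x}$ is at least $0$ (with the usual boundary conventions when $\hat{x}\in\{0,1\}$).

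The next step is to compute these one-sided derivatives and match them against the equilibrium condition from \Cref{obs:EqCond}. A short computation gives that the left derivative of $SC$ at $\hat x$ equals $l'(\hat x)-\alpha(\hat x^-)$ and the right derivative equals $l_r'(\hat x)-\alpha(\hat x^+)$, where $l_r'$ denotes the right derivative of $l$ (which exists everywhere by convexity of $l$ and satisfies $l_r'(\hat x)\ge l'(\hat x)$). The Pigouvian-tax equilibrium condition is exactly $\alpha(\hat x^-)\ge l'(\hat x)\ge \alpha(\hat x^+)$, so the left inequality yields $l'(\hat x)-\alpha(\hat x^-)\le 0$ and the right inequality, combined with $l_r'(\hat x)\ge l'(\hat x)$, yields $l_r'(\hat x)-\alpha(\hat x^+)\ge 0$. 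Thus $\hat{x}$ satisfies the convex optimality condition, so $SC(\hat x)=SC(x^*)$ and $PoA=1$.

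The main obstacle I anticipate is bookkeeping around one-sided derivatives rather than any conceptual difficulty: $l$ is only assumed left-differentiable, and $\alpha$ may have jumps, so I need to argue convexity of $l$ carefully from the left-derivative hypothesis, use the fact that convex functions have one-sided derivatives everywhere (with left $\le$ right), and handle the extreme cases $\hat x\in\{0,1\}$ via the conventions $\alpha(0^-)=+\infty$, $\alpha(1^+)=-\infty$ so that the relevant one-sided optimality inequality is vacuous at the boundary. Existence of an equilibrium need not be argued separately: if $EQ(l,\alpha,t)=\varnothing$ then $PoA=1$ by definition, and otherwise the argument above applies to every element of $EQ$.
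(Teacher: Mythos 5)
Your proposal is correct, and it reaches the conclusion by a genuinely different, though closely related, route. The paper argues globally and directly: using the representation in Equation~(\ref{eqn:SocialCost}) it writes $SC(x)-SC(\hat x)=\int_{x}^{\hat x}\bigl(\alpha(z)-l'(z)\bigr)\,dz$ for $x<\hat x$ (and the mirror identity $\int_{\hat x}^{x}\bigl(l'(z)-\alpha(z)\bigr)\,dz$ for $x>\hat x$) and bounds the integrand pointwise, combining monotonicity of $\alpha$ and of $l'$ with the equilibrium condition $\alpha(\hat x^{-})\ge l'(\hat x)\ge\alpha(\hat x^{+})$ from Observation~\ref{obs:EqCond}; this gives $SC(x)\ge SC(\hat x)$ for every $x$ with no mention of convexity. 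You instead establish that $SC$ is convex and verify the first-order optimality condition at $\hat x$ via one-sided derivatives. The underlying monotonicity facts are identical, but the trade-offs differ: the paper's version leans on the identity $l(x)=l(0)+\int_0^x l'(z)\,dz$ (assumed throughout via Equation~(\ref{eqn:SocialCost})) yet needs no derivative calculus for $SC$, whereas yours avoids that identity but must (i) derive convexity of $l$ from continuity plus a non-decreasing \emph{left} derivative --- a true but not entirely free implication, which you rightly flag --- and (ii) introduce the right derivative $l'_r$, which is not among the standing assumptions and has to be supplied by convexity. Your computation of the one-sided derivatives of $\int_x^1\alpha(z)\,dz$ as $-\alpha(\hat x^{-})$ and $-\alpha(\hat x^{+})$ is valid precisely because $\alpha$ is monotone, and your endpoint handling via $\alpha(0^{-})=+\infty$ and $\alpha(1^{+})=-\infty$ matches the paper's conventions. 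Both proofs are complete; yours has the merit of making explicit that, for convex externalities, the Pigouvian equilibrium condition is exactly the first-order condition of a convex minimization problem.
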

			
	\begin{proof}
		Note that by way of definition of TPoA, it is enough to show that the PoA for a specific tax function $t$ is 1.
		Fix the tax function to be $t(x)=l'(x)$ for every $x$.
		For $t(x)=l'(x)$, we get from Equation (\ref{eqn:EqCond}) that $\hat{x}$ is an equilibrium if and only if 
		\begin{equation}\label{eqn:EqCond_thrm1}
			\alpha(\hat{x}^{-}) \ge l'(\hat{x}) \ge \alpha(\hat{x}^{+})
		\end{equation}
		
		Since $l'$ is a non-decreasing function while $\alpha$ is a non-increasing function, this means there exists a single equilibrium point (or interval $(x_L,x_H)$ for which $\alpha(x) = l'(x)$ for every $x\in (x_L,x_H)$). 
		In case $l'$ and $\alpha$ do not intersect, this point will be either $\hat{x}=0$ or $\hat{x}=1$.\\
		We now claim that every point $\hat{x}$ satisfying Equation (\ref{eqn:EqCond_thrm1}) is also a socially optimal point: 
		From Equation (\ref{eqn:SocialCost}) we get that $SC(x) = l(0) + \int_{0}^{x}l'(z)dz + \int_{x}^{1}\alpha'(z)dz$. Combining this with the fact that $\alpha$ is non-increasing and that $l'(x)$ is non-decreasing, we get that for every $x<\hat{x}$
		\begin{gather*}
			\alpha(x)\ge\alpha(\hat{x})\ge l'(x)\\
			\Downarrow\\
			SC(x)-SC(\hat{x}) = \int_{x}^{\hat{x}}(\alpha(z)-l'(z))dz \ge 0
		\end{gather*}
		and for every $x>\hat{x}$
		\begin{gather*}
			l'(x)\ge l'(\hat{x})\ge\alpha(x)\\
			\Downarrow\\
			SC(x)-SC(\hat{x}) = \int_{\hat{x}}^{x}(l'(z)-\alpha(z))dz \ge 0
		\end{gather*}
		meaning that the social cost of any $x\ne\hat{x}$ can only increase over that of $\hat{x}$. Hence, for the  optimal social cost $SC(x^{*})$ we have 
		\begin{equation*}
			SC(\hat{x})=SC(x^{*})
		\end{equation*}
		which in turn means 
		\begin{equation*}
			TPoA(l) = \frac{SC(\hat{x})}{SC(x^{*})}=1
		\end{equation*}
	\end{proof}
			
	\emph{Theorem \ref{thm:increasing}} requires the (left) derivative of the externality function to be (weakly) increasing, which is a strong assumption. 
	We next relax this assumption, allowing it to only be ``close to" an increasing function, 
	captured by the concept of \emph{$\gamma$-approximately increasing function}:
	\begin{definition}[$\gamma$-approximately increasing function]
		Let $\gamma > 1$. Given a function $h:[0,1]\longrightarrow\mathbb{R}_{>0}$ we say $h(\cdot)$ is a \emph{$\gamma$-approximately weakly-increasing function with $H(\cdot)$}, if $H:[0,1]\longrightarrow\mathbb{R}_{>0}$ is a non-decreasing function such that for every $x\in[0,1]$ it holds:
		\begin{equation}
			1 \le \frac{h(x)}{H(x)} \le \gamma
		\end{equation}
		Note that if $h(\cdot)$ is itself a non-decreasing function, it is $1-$approximately weakly increasing (with $h(\cdot)$ itself).
	\end{definition}
			
	\begin{theorem}[$\gamma$-approximately Increasing $l'$]\label{thm:bus-gamma}
		Fix $G(l,\alpha,t)$. If the (left) derivative of the externality function $l'$ is $\gamma$-approximately weakly increasing with an integrable function $L'$, then $TPoA(l)\le \gamma$. Furthermore, for such $l$, setting the tax function to be $t(x)=L'(x)$, yields $PoA(l,\alpha,t)\le\gamma$ for every inconvenience function $\alpha$.
	\end{theorem}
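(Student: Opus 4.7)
The plan is to reduce to Theorem \ref{thm:increasing} by introducing an auxiliary externality whose derivative is exactly the monotone envelope $L'$, and then transferring the resulting optimality of the equilibrium through a sandwich inequality between the real and auxiliary social costs.

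Concretely, set the tax to $t(x)=L'(x)$ as suggested in the statement, and define the auxiliary externality function $\tilde{l}(x)=l(0)+\int_{0}^{x}L'(z)dz$, whose (left) derivative is the non-decreasing function $L'$. Note that the equilibrium condition from Observation \ref{obs:EqCond} depends only on $\alpha$ and $t$, not on $l$; hence a point $\hat{x}$ is an equilibrium of $G(l,\alpha,t)$ if and only if it is an equilibrium of the auxiliary game $G(\tilde{l},\alpha,t)$. Since $\tilde{l}$ has a non-decreasing derivative and $t=L'=\tilde{l}'$ is the corresponding Pigouvian tax, Theorem \ref{thm:increasing} applied to the auxiliary game gives that every such $\hat{x}$ minimizes the auxiliary social cost
\[
\widetilde{SC}(x)=l(0)+\int_{0}^{x}L'(z)dz+\int_{x}^{1}\alpha(z)dz.
\]

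The second step is the sandwich. Writing the true social cost as $SC(x)=l(0)+\int_{0}^{x}l'(z)dz+\int_{x}^{1}\alpha(z)dz$, the inequality $L'(z)\le l'(z)\le\gamma L'(z)$ immediately yields $\widetilde{SC}(x)\le SC(x)$, and it also yields
\[
SC(x)-\widetilde{SC}(x)=\int_{0}^{x}\bigl(l'(z)-L'(z)\bigr)dz\le(\gamma-1)\int_{0}^{x}L'(z)dz\le(\gamma-1)\,\widetilde{SC}(x),
\]
where in the last step I use $l(0)\ge 0$ and $\int_{x}^{1}\alpha(z)dz\ge 0$. Rearranging gives $SC(x)\le\gamma\,\widetilde{SC}(x)$ for every $x\in[0,1]$.

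Putting the two pieces together, for any equilibrium $\hat{x}$ of $G(l,\alpha,t)$ and any socially optimal point $x^{*}$ for the true game,
\[
SC(\hat{x})\;\le\;\gamma\,\widetilde{SC}(\hat{x})\;\le\;\gamma\,\widetilde{SC}(x^{*})\;\le\;\gamma\,SC(x^{*}),
\]
where the middle inequality uses that $\hat{x}$ minimizes $\widetilde{SC}$. Taking the supremum over equilibria gives $PoA(l,\alpha,t)\le\gamma$ for every $\alpha$, and hence $TPoA(l)\le\gamma$; the edge case in which no equilibrium exists is handled by the convention $PoA=1\le\gamma$. The only place requiring care is the sandwich $SC\le\gamma\,\widetilde{SC}$, which is where the non-negativity of $l(0)$ and of the tail $\int_{x}^{1}\alpha$ gets used; this is the step I would check first, since without it one only gets the weaker bound $SC(x)\le\widetilde{SC}(x)+(\gamma-1)\int_{0}^{x}L'$ which is not by itself multiplicative in $\widetilde{SC}$.
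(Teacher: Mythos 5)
Your proof is correct and follows essentially the same route as the paper: both introduce the auxiliary externality $L(x)=l(0)+\int_{0}^{x}L'(z)dz$, invoke Theorem \ref{thm:increasing} to conclude that every equilibrium (which depends only on $\alpha$ and $t=L'$, not on $l$) minimizes the auxiliary social cost, and then transfer via the pointwise sandwich $\widetilde{SC}(x)\le SC(x)\le\gamma\,\widetilde{SC}(x)$, where the factor $\gamma$ uses exactly the non-negativity of $l(0)$ and of $\int_{x}^{1}\alpha$ that you flag. If anything, your final chain $SC(\hat{x})\le\gamma\,\widetilde{SC}(\hat{x})\le\gamma\,\widetilde{SC}(x^{*})\le\gamma\,SC(x^{*})$ assembles the same three inequalities slightly more cleanly than the paper, which routes the argument through the auxiliary optimum $x^{*}_{L}$ and an extra claim that all equilibria share its social cost.
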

	
	\begin{proof}
		By way of definition of TPoA, it is enough to show that the PoA for a given tax function $t$ is $\gamma$. 
		Denote the integral function of $L'$ by $L$ and uniquely define $L$ by setting $L(0)=l(0)$. That is, $L(x) = l(0) + \int_{0}^{x} L'(z) dz$.
		Let us look at two games, one with externality function $l$ and the other with $L$, in both of which considering the tax $t(x)=L'(x)$ for every $x$, and the same $\alpha$.
		Denote by $x^{*}_{L}$ a social optimal point for $G_{L} = G(L,\alpha,L')$ and by $x^{*}_{l}$ a social optimal point for $G_{l} = G(l,\alpha,L')$.\\
		Since $x^{*}_{L}$ is an  optimal point in $G_{L}$, We know that
		\begin{equation*}
			SC_{G_{L}}(x^{*}_{L}) = L(0) + \int_{0}^{x^{*}_{L}}L'(z)dz + \int_{x^{*}_{L}}^{1}\alpha(z)dz \le L(0) + \int_{0}^{x^{*}_{l}}L'(z)dz + \int_{x^{*}_{l}}^{1}\alpha(z)dz = SC_{G_{L}}(x^{*}_{l})
		\end{equation*}
		Since $l$ is $\gamma$-approximately increasing with $L$, and since $\gamma\geq 1$, it holds that $l'(x)\le \gamma \cdot L'(x)$ for every $x\in [0,1]$,  and thus
		\begin{equation*}
			\begin{split}
				SC{_{G_{L}}}(x^{*}_{L}) &= L(0) + \int_{0}^{x^{*}_{L}}L'(z)dz + \int_{x^{*}_{L}}^{1}\alpha(z)dz \ge L(0) +  \frac{1}{\gamma}\int_{0}^{x^{*}_{L}}l'(z)dz + \int_{x^{*}_{L}}^{1}\alpha(z)dz \\
				&=l(0) + \frac{1}{\gamma}\int_{0}^{x^{*}_{L}}l'(z)dz + \int_{x^{*}_{L}}^{1}\alpha(z)dz  \ge \frac{1}{\gamma}\left(l(0) + \int_{0}^{x^{*}_{L}}l'(z)dz + \int_{x^{*}_{L}}^{1}\alpha(z)dz\right) = \frac{1}{\gamma}SC{_{G_{l}}}(x^{*}_{L})
			\end{split}
		\end{equation*}
		A $\gamma$-approximately increasing $l'$ with $L'$ also means that $ l'(x)\ge L'(x)$, and we get that:
		\begin{equation*}
			SC{_{G_{L}}}(x^{*}_{l}) = L(0) + \int_{0}^{x^{*}_{l}}L'(z)dz + \int_{x^{*}_{l}}^{1}\alpha(z)dz \le l(0) + \int_{0}^{x^{*}_{l}}l'(z)dz + \int_{x^{*}_{l}}^{1}\alpha(z)dz = SC{_{G_{l}}}(x^{*}_{l})
		\end{equation*}
		
		We now note that if $L'$ is non-decreasing, $x^{*}_{L}$ is an equilibrium point in $G_{l}$, and that if $G_{l}$ has more than one equilibrium point, each such point has the same social cost as $x^{*}_{L}$. 
		Combining it all together, we get that for $\gamma$-approximately latency function derivative $l'$:
		\begin{equation*}
			\forall\hat{x}\in EQ_{(l,\alpha,L')}:\quad SC_{G_{l}}(\hat{x}) = SC_{G_{l}}(x^{*}_{L}) \le \gamma\cdot SC_{G_{L}}(x^{*}_{L}) \le \gamma\cdot SC_{G_{L}}(x^{*}_{l}) \le \gamma\cdot SC_{G_{l}}(x^{*}_{l})
		\end{equation*}
		Meaning
		\begin{equation*}
			PoA(l,\alpha,L') = \gamma
		\end{equation*}
		which implies that 
		\begin{equation*}
			TPoA(l) \le \gamma
		\end{equation*}
	\end{proof}
			
	We now prove our main result -- two complementary theorems that tightly bound the TPoA when the ratio between the maximal and minimal value of the externality derivative is bounded.
			
	\begin{theorem}[Bounded Latency Derivative Upper Bound]\label{thm:bus-upper}
		Given $H>0$, let $l$ be an externality function with (left) derivative $l'$ satisfying $\frac{\sup_{0\le x\le1} l'(x)}{\inf_{0\le x\le1} l'(x)} \leq  H$.
		Then, $TPoA(l)\le \sqrt{H}$. 
		Moreover, for any such $l$, the guarantee is obtained by the constant tax $t(x) =\sqrt{H}\cdot\inf_{0\le x\le1}l'(x)$ for every $x\in [0,1]$.
	\end{theorem}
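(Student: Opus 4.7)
The plan is to use the structural fact that the proposed tax is \emph{constant}, namely $t^{\star} = m\sqrt{H}$ where $m = \inf_{x} l'(x)$ (and let $M = \sup_{x} l'(x)$, so $M \le Hm$). Note that $t^{\star}$ is essentially the geometric mean of the extremes of $l'$: indeed $\sqrt{mM}\le m\sqrt{H}$, and equality holds when $M/m = H$. Because $t$ does not depend on $x$, the equilibrium condition from Observation \ref{obs:EqCond} becomes simply $\alpha(\hat{x}^{-})\ge t^{\star}\ge \alpha(\hat{x}^{+})$, which means $\alpha(z)\ge t^{\star}$ for $z<\hat{x}$ and $\alpha(z)\le t^{\star}$ for $z>\hat{x}$. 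This one-sided control over $\alpha$ around any equilibrium is the engine of the proof.

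Next, I would fix an arbitrary equilibrium $\hat{x}\in EQ(l,\alpha,t)$ and an optimum $x^{\star}$, and split the analysis into the two cases $x^{\star}\le \hat{x}$ and $x^{\star}\ge \hat{x}$. In each case, using the identity
\begin{equation*}
SC(\hat{x}) - SC(x^{\star}) \;=\; \int_{x^{\star}}^{\hat{x}} l'(z)\,dz \;-\; \int_{x^{\star}}^{\hat{x}} \alpha(z)\,dz
\end{equation*}
(with the integrals oriented appropriately), I would bound the pointwise integrands on the interval between $x^{\star}$ and $\hat{x}$ using the two kinds of constraints we have: $m\le l'(z)\le M\le Hm$ uniformly, together with $\alpha(z)\ge t^{\star}=m\sqrt{H}$ when $z<\hat{x}$ and $\alpha(z)\le t^{\star}=m\sqrt{H}$ when $z>\hat{x}$. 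This yields
\begin{equation*}
SC(\hat{x}) - SC(x^{\star}) \;\le\; |\hat{x}-x^{\star}|\cdot m(\sqrt{H}-1)\cdot (\text{either } \sqrt{H} \text{ or } 1),
\end{equation*}
where the multiplicative factor depends on which case we are in (in the case $x^{\star}\le \hat{x}$ the dominant term is $l'\le Hm$ minus $\alpha\ge m\sqrt H$, giving a factor $m\sqrt H(\sqrt H-1)$; in the case $x^{\star}\ge \hat{x}$ the dominant term is $\alpha\le m\sqrt H$ minus $l'\ge m$, giving $m(\sqrt H-1)$).

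Complementarily, I would lower-bound $SC(x^{\star})$ by keeping only the piece of its defining integral that lives on the interval between $\hat{x}$ and $x^{\star}$: in the first case this gives $SC(x^{\star})\ge \int_{x^{\star}}^{\hat{x}}\alpha(z)\,dz \ge |\hat{x}-x^{\star}|\cdot m\sqrt{H}$, and in the second case $SC(x^{\star})\ge \int_{\hat{x}}^{x^{\star}} l'(z)\,dz \ge |\hat{x}-x^{\star}|\cdot m$. Dividing the upper bound on $SC(\hat{x})-SC(x^{\star})$ by the lower bound on $SC(x^{\star})$ in each case, the $|\hat{x}-x^{\star}|$ and $m$ factors cancel and both cases collapse to $SC(\hat{x})/SC(x^{\star}) \le 1 + (\sqrt{H}-1) = \sqrt{H}$, as desired.

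The main obstacle is not computational but conceptual: one must notice that a \emph{constant} tax, balanced as the geometric mean of $m$ and $M$ (up to the slack allowed by $H$), symmetrises the two cases so that overshooting and undershooting $x^{\star}$ incur comparable bounded penalties. Any tax biased toward $m$ or toward $M$ would make one of the two cases lose a factor of $H$ rather than $\sqrt{H}$. Once the geometric-mean choice is identified, the inequalities above are routine, and the fact that the bound holds for every equilibrium immediately gives $PoA(l,\alpha,t)\le\sqrt{H}$ and hence $TPoA(l)\le\sqrt{H}$.
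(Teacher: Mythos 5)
Your proposal is correct and follows essentially the same route as the paper's proof: the same constant geometric-mean tax $t\equiv\sqrt{H}\cdot\inf l'$, the same case split on whether $\hat{x}$ exceeds $x^{*}$, and the same use of the equilibrium condition to bound $\alpha$ on the interval between them against the uniform bounds $m\le l'\le Hm$. Writing the comparison as $1+\bigl(SC(\hat{x})-SC(x^{*})\bigr)/SC(x^{*})$ rather than via the paper's $\frac{x+a}{y+a}\le\frac{x}{y}$ cancellation is only a cosmetic rearrangement of the same estimate.
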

			
	\begin{proof}
		By way of definition of TPoA, it is enough to show that the PoA for a given toll function $t$ is no greater than $\sqrt{H}$. In order to do that, we will choose  the constant toll $t(x) =\sqrt{H}\cdot\inf_{0\le x\le1}l'(x)$ for any $x$. We fix some socially optimal point $x^*$ and prove the claim for any equilibrium point $\hat{x}$, depending which of the two is larger (note that as the case of $x^{*}=\hat{x}$ is trivial, we omit it from the proof):
		\subsubsection*{First Case: $x^{*}<\hat{x}$} In this case, it holds that:
		\begin{gather*}
			SC(x^{*}) = l(0) + \int_{0}^{x^{*}} l'(z)dz + \int_{x^{*}}^{\hat{x}} \alpha(z)dz + \int_{\hat{x}}^{1}\alpha(z)dz\\
			SC(\hat{x}) = l(0) + \int_{0}^{x^{*}} l'(z)dz + \int_{x^{*}}^{\hat{x}} l'(z)dz + \int_{\hat{x}}^{1}\alpha(z)dz\\
		\end{gather*}
		and using the fact that for every $x>y$ and for every $a>0$, it holds that $\frac{x+a}{y+a}\le\frac{x}{y}$, we get:
		\begin{equation*}
			\begin{split}
				PoA &= \frac{SC(\hat{x})}{SC(x^{*})} \le \frac{\int_{x^{*}}^{\hat{x}} l'(z)dz}{\int_{x^{*}}^{\hat{x}} \alpha(z)dz}
				\le \frac{\int_{x^{*}}^{\hat{x}} \sup_{0\le x\le1}l'(x)dz}{\int_{x^{*}}^{\hat{x}} \alpha(\hat{x}^{-})dx}
				\leq \frac{\sup_{0\le x\le1}l'(x)[\hat{x}-x^{*}]}{\alpha(\hat{x}^{-})[\hat{x}-x^{*}]}\\
				&=\frac{\sup_{0\le x\le1}l'(x)}{\alpha(\hat{x}^{-})}
				\le\frac{\sup_{0\le x\le1}l'(x)}{t(\hat{x})} =\frac{\sup_{0\le x\le1}l'(x)}{\sqrt{H}\cdot\inf_{0\le x\le1}l'(x)} \leq\frac{H}{\sqrt{H}}=\sqrt{H}\\
			\end{split}
		\end{equation*}
		
		by the fact that for any equilibrium $\hat{x}$ it holds that $\alpha(\hat{x}^{-}) \ge  t(\hat{x})$ by Equation (\ref{eqn:EqCond}).
		\subsubsection*{Second Case: $x^{*}>\hat{x}$} In this case, it holds that:
		\begin{gather*}
			SC(x^{*}) = l(0) + \int_{0}^{\hat{x}} l'(z)dz + \int_{\hat{x}}^{x^{*}} l'(z)dz + \int_{x^{*}}^{1}\alpha(z)dz\\
			SC(\hat{x}) = l(0) + \int_{0}^{\hat{x}} l'(z)dz + \int_{\hat{x}}^{x^{*}} \alpha(z)dz + \int_{x^{*}}^{1}\alpha(z)dz\\
		\end{gather*}
		and using the fact that for every $x>y$ and for every $a>0$, it holds that $\frac{x+a}{y+a}\le\frac{x}{y}$, we get:
		\begin{equation*}
			\begin{split}
				PoA &= \frac{SC(\hat{x})}{SC(x^{*})} \le \frac{\int_{\hat{x}}^{x^{*}} \alpha(z)dz}{\int_{\hat{x}}^{x^{*}} l'(z)dz} \le \frac{\int_{\hat{x}}^{x^{*}} \alpha(\hat{x}^{+})dx}{\int_{\hat{x}}^{x^{*}} \inf_{0\le x\le1}l'(x)dz}
				=\frac{\alpha(\hat{x}^{+})[x^{*}-\hat{x}]}{\inf_{0\le x\le1}l'(x)[x^{*}-\hat{x}]}\\ &=\frac{\alpha(\hat{x}^{+})}{\inf_{0\le x\le1}l'(x)}
				\le\frac{t(\hat{x})}{\inf_{0\le x \le 1}l'(x)} =\frac{\sqrt{H}\cdot\inf_{0\le x\le1}l'(x)}{\inf_{0\le x \le 1}l'(x)}\leq \sqrt{H}\\
			\end{split}
		\end{equation*}
		by the fact that for any equilibrium $\hat{x}$ it holds that $\alpha(\hat{x}^{+}) \le  t(\hat{x})$ by Equation (\ref{eqn:EqCond}).
	\end{proof}
			
	\begin{theorem}[Bounded Latency Derivative Lower Bound]\label{thm:bus-lower}
		For any $H>0$ there exists an externality function $l$ with (left) derivative $l'$ satisfying $\frac{\sup_{0\le x\le1} l'(x)}{\inf_{0\le x\le1} l'(x)}\le H$, for which $TPoA(l)\ge \sqrt{H}$.
	\end{theorem}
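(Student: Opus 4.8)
The plan is to exhibit a \emph{single} externality function $l$ whose derivative oscillates between its extreme values at infinitely many scales accumulating at $0$, and then to show that against \emph{any} tax the adversary can choose a scale at which it forces a bad equilibrium at one point where the tax value is squeezed from both sides. Concretely, normalize $\inf l' = 1$, pick a super-geometrically decreasing sequence $x_1>x_2>\cdots\to 0$ (e.g.\ $x_j=2^{-2^j}$) and a sequence $\beta_j\downarrow 0$ (e.g.\ $\beta_j=1/j$), and let $l'(x)=H$ on the ``spikes'' $S_j=(\beta_j x_j,\,x_j)$ and $l'(x)=1$ elsewhere, with $l(0)=0$. The spacing is engineered so that (i) each spike fills almost all of its scale $(0,x_j)$ while all higher-indexed spikes are negligible, and (ii) the low region just above $S_j$, running up to the next spike $S_{j-1}$, is far longer than $x_j$; both hold for the choices above. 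Then $\sup l'/\inf l' = H$, so $l\in\LL_H$.

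Fix any tax $t$ and any scale $j$, and write $\tau=t(x_j)$. I would place an equilibrium at the right endpoint $x_j$ of $S_j$ in two ways, each verified via \cref{obs:EqCond}. For an \emph{over-use} exploitation, take $\alpha\equiv\tau$ on $(0,x_j]$ and $0$ above; then $x_j$ is an equilibrium, the optimum sits at the left end of $S_j$, and because the externality accumulated below $S_j$ is negligible next to the spike's own contribution $H\cdot|S_j|$ (property (i)), the price of anarchy tends to $H/\tau$ as $j\to\infty$. For an \emph{under-use} exploitation, take $\alpha\equiv\tau$ up to the start of $S_{j-1}$ and $0$ above; then $x_j$ is again an equilibrium, its social cost is $\approx\tau\cdot(\text{length of the long low region})$ while the optimum cost is dominated by that same region's base cost, so (property (ii)) the ratio tends to $\tau/1=\tau$. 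Crucially, both threats are evaluated at the \emph{same} tax value $\tau=t(x_j)$.

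At scale $j$ the adversary therefore secures $PoA\ge \max(H/\tau,\ \tau)-\mathrm{err}_j$, and since $\max(H/\tau,\tau)\ge\sqrt H$ for every $\tau\ge 0$ (the two branches cross at $\tau=\sqrt H$), this is $\ge\sqrt H-\mathrm{err}_j$ with $\mathrm{err}_j\to 0$. Each scale supplies a legitimate $\alpha$, so taking the supremum over scales gives $\sup_\alpha PoA(l,\alpha,t)\ge\sqrt H$; as this holds for every $t$, we conclude $TPoA(l)=\inf_t\sup_\alpha PoA(l,\alpha,t)\ge\sqrt H$ (the supremum over $\alpha$ need not be attained, only bounded below by $\sqrt H$).

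The delicate point — and the reason a single fixed high region fails — is the \emph{base cost}: an over-use equilibrium pays for the externality to its left and an under-use optimum pays for the externality up to it, and for a high region of fixed measure $\mu$ one of these is $\Theta((H-1)\mu)$, keeping the ratio strictly below $\sqrt H$ and leaving the tax room to sit slightly above or below $\sqrt H$ and escape. The multiscale construction is exactly what drives both base costs to $0$ \emph{simultaneously} along the sequence of scales. A second subtlety is that $t$ may be discontinuous; this is neutralized by co-locating the two threats at the single point $x_j$, so that the one value $t(x_j)$ is constrained from both directions and no jump can separate them. I expect the main obstacle to be verifying this co-location quantitatively, i.e.\ checking that the intended comparison points are genuine optima (bounding the optimal social cost above by the cost at the chosen point) and that the error terms are $O(\beta_j)+O(x_{j+1}/x_j)\to 0$; the rest is routine bookkeeping.
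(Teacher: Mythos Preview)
Your argument is correct, and it shares the paper's central trick: pin the adversary's demand to the single value $\tau=t(\hat x)$ at a chosen test point $\hat x$, then deploy one of two step-shaped $\alpha$'s to force either an over-use ratio $\approx H/\tau$ or an under-use ratio $\approx \tau$, so that $\max(H/\tau,\tau)\ge\sqrt H$ regardless of $\tau$. The paper does exactly this with a \emph{single} two-level derivative ($l'=1$ on $[0,2\epsilon]$ and $l'=1/H$ on $(2\epsilon,1]$), takes $\hat x=\epsilon$, and lets $\epsilon\to 0$ across a one-parameter family of externality functions to kill the additive error.

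Where you differ is in packaging: instead of a family $l_\epsilon$ with $\epsilon\to 0$, you build one fixed $l$ whose derivative carries infinitely many high intervals $(\beta_j x_j,x_j)$ accumulating at~$0$, and for a given $t$ you pick a scale $j$ late enough that the error is as small as you like. This buys you something real: your proof delivers a \emph{single} externality function with $TPoA(l)\ge\sqrt H$, which is literally what the theorem asserts, whereas the paper's $\epsilon\to 0$ argument, read strictly, only gives $\sup_{l\in\LL_H}TPoA(l)\ge\sqrt H$. The price you pay is a more intricate construction and the bookkeeping you flag (disjointness of the $S_j$, the bounds $l(x_j)\ge H(1-\beta_j)x_j$, $l(\beta_{j-1}x_{j-1})\le (H-1)x_j+\beta_{j-1}x_{j-1}$, and checking that the multiplicative errors $\beta_j$ and $x_j/(\beta_{j-1}x_{j-1})$ vanish independently of $\tau$); none of this is hard, but do write it out. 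Your remark that ``a single fixed high region fails'' is exactly the reason the paper needs $\epsilon\to 0$; it is not a flaw in their method, just a different way of taking the same limit.
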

			
	\begin{proof}
		Given $H>1$, let us define $\Delta=\frac{1}{H}$ and choose some $\epsilon \ll 1$, and look at the following $l'(x)$:
		\begin{equation*}
			l'(x) = 
			\begin{cases}
				1 & \quad 0 \le x \le 2\epsilon\\
				\Delta & \quad 2\epsilon < x \le 1 \\
			\end{cases}
		\end{equation*}
		
		Clearly, $l'$ upholds the conditions of the theorem. Note that by defining $l'$ we define $l$ up to a constant. We define $l(0)=0$, which uniquely defines $l$.
		By way of definition of TPoA, it is enough to show that for every tax function $t$ there exist an individual cost function $\alpha$ for which the PoA is greater than $\sqrt{H}$. In order to do that, we will split the proof to two cases, building $\alpha$ according to the value of $t$ at the point $x=\epsilon$, as illustrated in Figure \ref{fig:LowerBound} and formulated bellow.
		
		\begin{figure}[htp]
			\centering
			\includegraphics[width=5in]{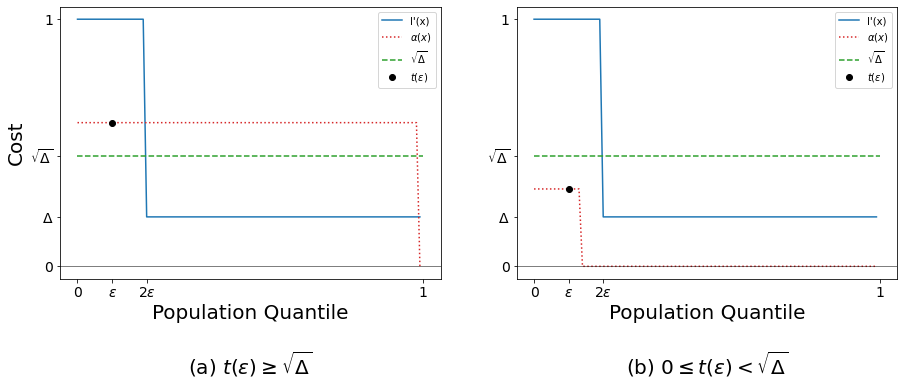}
			\caption{The externality function derivative $l'$ (a straight blue line) and the individual cost function $\alpha$ (a dashed red line), for the two cases of the proof: $t(\epsilon)\ge \sqrt{\Delta}$ in figure (a) and $0 \le t(\epsilon) < \sqrt{\Delta}$ in figure (b). In both cases, $x=\epsilon$ is an equilibrium point. While in (a) the social cost of the optimal point is at most $SC(1)$, in (b) the social cost of the optimal point is at most $SC(0)$.}
			\label{fig:LowerBound}
		\end{figure}	
		
		\subsubsection*{First Case: $t(\epsilon) \ge \sqrt{\Delta} $}
		If $t(\epsilon) \ge \sqrt{\Delta} $, we will define the following individual cost function $\alpha$:
		\begin{equation*}
			\alpha(x) = 
			\begin{cases}
				t(\epsilon) & \quad x< 1 \\
				0 & \quad x=1 \\
			\end{cases}
		\end{equation*}
		
		As $\alpha$ is continuous at $\epsilon$ and  $\alpha(\epsilon)= t(\epsilon)$, 
		the point $\hat{x}=\epsilon$ is an equilibrium (although not necessarily the only one). Using the fact that $l(x) = l(0) + \int_{0}^{x}l'(z)dz$, we get
		\begin{equation*}
			SC(x^{*})\le SC(1) = l(0) + \int_{0}^{1}l'(z)dz = 0 + 2\epsilon + \Big(1-2\epsilon\Big)\cdot \Delta = \Delta +2\epsilon\Big(1-\Delta\Big)< \Delta + 2\epsilon
		\end{equation*}
		as well as
		\begin{equation*}
			\begin{split}
				\max_{\hat{y}\in EQ}SC(\hat{y})&\ge SC(\hat{x})=SC(\epsilon) = l(\epsilon) + \int_{\epsilon}^{1} \alpha(z) dz = \epsilon + \Big(1-\epsilon\Big)\cdot t(\epsilon) \\
				&\ge \epsilon + \sqrt{\Delta}\Big(1-\epsilon\Big) = \sqrt{\Delta} + \epsilon \Big(1-\sqrt{\Delta}\Big)\geq  \sqrt{\Delta}  \\
			\end{split}
		\end{equation*}
		Thus, the taxed price of anarchy for such $l$ satisfies:
		\begin{equation*}
			TPoA(l)\ge \frac{SC(\epsilon)}{SC(1)} \geq  
			\frac{\sqrt{\Delta}}{\Delta+2\epsilon} \overset{\epsilon\rightarrow0}{\longrightarrow} \frac{1}{\sqrt{\Delta}} = \sqrt{H}
		\end{equation*}
		
		\subsubsection*{Second case: $0 \le t(\epsilon) < \sqrt{\Delta} $}
		If $0 \le t(\epsilon) < \sqrt{\Delta} $, then define $\alpha$ the following way:
		\begin{equation*}
			\alpha(x) = 
			\begin{cases}
				t(\epsilon) & \quad x\le \epsilon+\epsilon^{2} \\
				0 & \quad \epsilon+\epsilon^{2} < x \le 1 \\
			\end{cases}
		\end{equation*}
		By definition, the point $\hat{x}=\epsilon$ is an equilibrium (although not necessarily the only one). Using the fact that $l(x) = l(0) + \int_{0}^{x}l'(z)dz$, we get
		\begin{gather*}
			SC(x^{*})\le SC(0) = t(\epsilon)\cdot\Big(\epsilon+\epsilon^{2}\Big)\\
			\max_{\hat{y}\in EQ}SC(\hat{y})\ge SC(\hat{x}) = SC(\epsilon) = \epsilon +\epsilon^{2}\cdot t(\epsilon)\geq  \epsilon \\
		\end{gather*}
		
		and the taxed price of anarchy upholds:
		\begin{equation*}
			TPoA \ge \frac{SC(\epsilon)}{SC(0)} \geq \frac{\epsilon }{(\epsilon+\epsilon^{2})\cdot t(\epsilon)}
			= \frac{1}{(1+\epsilon)\cdot t(\epsilon)} = \frac{1}{1+\epsilon}\cdot\frac{1}{t(\epsilon)}\ge \frac{1}{1+\epsilon}\cdot\frac{1}{\sqrt{\Delta}} \overset{\epsilon\rightarrow0}{\longrightarrow} \frac{1}{\sqrt{\Delta}}=\sqrt{H}
		\end{equation*}
	\end{proof}
			
	\section{Ride Sharing Games}
	\label{sec:RSG-def}
	The Public-Good Congestion Game model studied so far captures the situation of congestion on a road, where the public good is the road usage. A passenger that chooses to take a private car uses the public good, but may alternatively take the bus. Taking the bus is less convenient, yet passengers that ride the bus do not increase the latency on the road. 
	In this section we expand the model to support additional ride sharing (carpooling) models in which passengers can ride in shared cars of limited capacity (unlike the bus).
			
	Shared car models can differ not only by their carpool capacity, but also by the tolls that carpooling passengers need to pay. One possibility is that carpools are charged the same as a private car and the toll is shared among its passengers, another is that the shared ride is exempt from toll. One can also consider intermediate cases in between (e.g., shared cars get 50\% discount on the toll which is still equally split).
	The general model we present captures all these models by adding two additional parameters to the game. The first is the marginal load a passenger in a shared ride adds to the road, and the other is the fraction of toll charged to a private car that a passenger sharing a ride is required to pay. 
	We next formally define this family of ride sharing games.
			
	\subsection{Ride Sharing Game Definition}
	A Ride Sharing Game $RSG(l,\alpha,t,\nu)$ is defined by four parameters.
	The first three parameters are the same as in the definition of Public-Good Congestion Games: $l(x)$ is the latency experienced by every road-taking agent when the mass of vehicles on the road is $x$, $\alpha(x)$ is the marginal inconvenience of the $x$'th percentile of population from sharing their ride with others and $t(x)$ is the toll that every private car taking the road needs to pay when the mass of vehicles on the road is $x$. The fourth parameter of a Ride Sharing Game is the \emph{Ride Sharing Technology} $\nu$ captured by a pair of parameters $\kappa$ and $\tau$:
	\begin{definition}[Ride Sharing Technology $\nu=\{\kappa,\tau\}$ ]
		We use $\nu=\{\kappa,\tau\}$ to denote the set of parameters defining the \emph{Ride Sharing Technology} used in the game:
		\begin{itemize}
			\item $\kappa\in[0,1)$ is the marginal load a passenger in a shared ride adds to the road, when normalizing the load of a private car to $1$.\footnote{The assumption that $\kappa<1$ corresponds to the externality of a passenger in a private car being larger than that of a passenger riding a shared car.} 
			Thus, with $x$ fraction taking private cars (and $1-x$ riding shared cars), the total load on the road is $x+\kappa(1-x)=(1-\kappa)x + \kappa$ which we denote by $\kappa(x)$. 
			For example, when ride sharing passengers do not add any additional load to the road (as the case for a bus) then $\kappa=0$, 
			while $\kappa=1/2$ means that a shared ride passenger add to the road half the load of a private car passenger.\footnote{This formulation is general enough to capture ride sharing vehicles that impose larger load on the road than private cars, as long as the per-passenger load is at most $1$: for example, a minibus with $20$ passengers that have total load of $2$ (like two private cars) has a per-passenger load of $1/10$.}
			\item $\tau\in[0,1)$ is the fraction of toll charged to a private car that a passenger sharing a ride is required to pay.\footnote{The assumption that $\tau<1$ corresponds to the natural assumption that the toll charged to a passenger of a private car is larger than the toll imposed on a passenger in a shared car.}\footnote{Note that the total toll paid by all carpool passengers might not necessarily be equal to the toll a private car is charged.} For example, $\tau=0$ means ride sharing passengers are exempt from paying any toll, while $\tau=1/2$ means a ride sharing passenger is  required to pay half the toll of a private car passenger.
		\end{itemize}
		For example, in the bus model, we consider the bus as a single vehicle that has negligible impact on the congestion of the road, and bus passengers do not add additional load to the road ($\kappa=0$). Buses are assumed to be exempt from tolls ($\tau=0$). Thus, using the above notations we denote the \emph{Bus Technology} as $BUS=\{0,0\}$. We present additional examples in Section \ref{sec:RSG-models}.
	\end{definition}
			
	Using these parameters, we redefine the personal cost of an agent.
			
	\begin{definition}[Personal Cost $c_{A}$]
		We use $c_{A}:[0,1]\times[0,1]\longrightarrow\mathbb{R}_{\ge0}$ to denote a \emph{personal cost function}, with $c_{A}(i,x)$ being the personal cost (or disutility) of a passenger in the $i\in[0,1]$ percentile choosing action $A\in \{CAR, RS\}$ (riding a private car or a ride sharing), when $x$ fraction of the population rides private cars.
		
		This means that when the ride sharing technology is $\nu=\{\kappa,\tau\}$ the personal cost can be written as follows:
		\begin{gather*}
			c_{CAR}(i,x) = l(\kappa(x)) + t(\kappa(x)) \quad\quad \text{when $i$ rides a private car}\\
			c_{RS}(i,x) = \alpha(i) + l(\kappa(x)) + \tau\cdot t(\kappa(x)) \quad\quad \text{ when $i$ rides a shared car}\\
		\end{gather*}
		We assume passengers are rational: 
		each infinitesimal passenger will take the action (ride a private car or a shared car) that has lower personal cost for the passenger (breaking ties arbitrarily).
	\end{definition}
			
	We will be interested in studying games induced by various RS technologies. For any RS technology $\nu$, we will consider the game for that fixed $\nu$ and denote $G_{\nu}(l,\alpha,t) = RSG(l,\alpha,t,\nu)$. The generalization of the game affects the expressions given for equilibrium points, as well as the social cost. The generalized equilibrium means a passengers in the equilibrium percentile $\hat{x}$ is not interested in changing his choice between riding a private car and sharing his ride, so the equilibrium condition becomes:
	\begin{equation}
		\alpha(\hat{x}^{-}) \ge  (1-\tau)\cdot t(\kappa(\hat{x})) \ge \alpha(\hat{x}^{+})\\
	\end{equation}
	and the social cost, which is given by integrating the personal cost over all $i\in[0,1]$, while omitting the tolls part in the cost, is given by
	\begin{equation}
		\label{eqn:GeneralSocialCost}
		SC_{(l,\alpha,\nu)}(x) = l(\kappa(x)) + \int_{x}^{1}\alpha(z)dz = l(\kappa)+\int_{0}^{x}l'(\kappa(z))(1-\kappa)dz + \int_{x}^{1}\alpha(z)dz
	\end{equation}
	The reader can find a full formal definition of the generalized game in Appendix \ref{appendix:GenModel}.
			
	\subsection{Examples of Ride Sharing Games}
	\label{sec:RSG-models}
	We next highlight several \emph{Ride Sharing Technologies} of interest:
	The bus model (which is equivalent to the \emph{Public-Good Congestion Game}), a carpool model where ride sharing passengers are required to pay the same toll as a private car (and split it equally), and a carpool model in which ride sharing passengers are exempt from paying the toll.
	We show how the parameters of the ride sharing technology capture these three models, and give explicit formulas for the personal and social costs of these scenarios.
			
	\subsubsection{Bus Model}
	In the bus model, an agent can choose whether to ride a private car or to ride a bus with infinite capacity sharing it with other agents being exempt from toll. In our model of an ``ideal bus'', the bus does not increase the congestion on the road, so every bus passenger has no impact on road congestion (meaning $\kappa=0$).\footnote{The assumption that no congestion is created by an infinite-size bus is ideal, and clearly not very realistic. A more nuanced model has finite-capacity ride sharing vehicles that do increase road congestion.}
	An agent in the $i$ percentile who chooses to take the bus pays their marginal inconvenience $\alpha(i)$ and is exempt from paying the toll on the road, as the bus itself is exempt from paying it (meaning $\tau=0$).\\
	We denote by $G_{BUS}(l,\alpha,t) = RSG(l,\alpha,t,\{0,0\})$ a bus model game in which the \emph{personal cost} of a passenger is
	\begin{gather*}
		c_{CAR}(i,x) =  l(x)+t(x) \quad\quad \text{when $i$ rides a private car}\\
		c_{BUS}(i,x) = \alpha(i) +l(x) \quad\quad \text{when $i$ rides the bus}\\
	\end{gather*}
	and the \emph{social cost $SC_{BUS}(x)$} is
	\begin{equation*}
		SC_{BUS}(x) = l(x) + \int_{x}^{1}\alpha(z)dz = l(0) + \int_{0}^{x}l'(z)dz + \int_{x}^{1}\alpha(z)dz
	\end{equation*}
	Meaning that the bus technology defined by the parameters $\nu=\{0,0\}$ induces the same game as the \emph{Public-Good Congestion Game}.
			
	\subsubsection{Finite Capacity Ride Sharing Models}
	In the finite-capacity ride sharing model, an agent can choose whether to ride a private car alone or to ride a shared car with additional $K-1$ agents (meaning $\kappa=\frac{1}{K}$) and thus the load on the road when $x$ fraction of the population ride private cars is $\kappa(x)=\frac{K-1}{K}x+\frac{1}{K}$. Though all agents are affected by the road latency, riding a shared car decreases the load.
	\begin{remark}
		As agents 'fill up' cars up to their capacity before using an additional shared car, the only 'shared' car that has less then the maximal capacity number of passengers is the last one, and as agents are infinitesimal, this is negligible.
	\end{remark}
	We highlight two finite-capacity ride sharing models: the \emph{Non-tolled Ride Sharing} model in which agents are exempt from paying the toll (meaning $\tau=0$) and the \emph{Tolled Ride Sharing} model in which the carpool is charged the same as a private car and its $K$ passengers split the toll equally (meaning $\tau=\frac{1}{K}$).
			
	\paragraph{Non-tolled Ride Sharing}
	We denote by $G_{RS}(l,\alpha,t)  = RSG(l,\alpha,t,\{\frac{1}{K},0\})$ a non-tolled ride sharing game in which the \emph{personal cost} of a passenger is:
	\begin{gather*}
		c_{CAR}(i,x) =   l\Big(\frac{K-1}{K}\cdot x + \frac{1}{K}\Big)+t\Big(\frac{K-1}{K}\cdot x + \frac{1}{K}\Big) \quad\quad \text{when $i$ rides a private car}\\
		c_{RS}(i,x) = \alpha(i) +  l\Big(\frac{K-1}{K}\cdot x + \frac{1}{K}\Big) \quad\quad \text{ when $i$ share a ride}\\
	\end{gather*}
			
	\paragraph{Tolled Ride Sharing}
	We denote by $G_{RS-TOLLED}(l,\alpha,t)  = RSG(l,\alpha,t,\{\frac{1}{K},\frac{1}{K}\})$ a tolled ride sharing game in which the \emph{personal cost} of a passenger is:
	\begin{gather*}
		c_{CAR}(i,x) =   l\Big(\frac{K-1}{K}\cdot x + \frac{1}{K}\Big)+t\Big(\frac{K-1}{K}\cdot x + \frac{1}{K}\Big) \quad\quad \text{when $i$ rides a private car}\\
		c_{RS-TOLLED}(i,x) = \alpha(i) +  l\Big(\frac{K-1}{K}\cdot x + \frac{1}{K}\Big) + \frac{1}{K}\cdot t\Big(\frac{K-1}{K}\cdot x + \frac{1}{K}\Big)\quad\quad \text{ when $i$ share a tolled ride}\\
	\end{gather*}
	
	Note that as the \emph{social cost $SC_{l,\alpha,\nu}(x)$} is not affected by tolls, the social cost of both these models is the same:
	\begin{equation*}
		\begin{split}
			SC_{RS}(x) = SC_{RS-TOLLED}(x) &= l\left(\frac{K-1}{K}\cdot x + \frac{1}{K}\right) +\int_{x}^{1}\alpha(z)dz\\
			&= l(\frac{1}{K}) + \int_{0}^{x} \frac{K-1}{K} \cdot l'\left(\frac{K-1}{K}\cdot z + \frac{1}{K}\right)dz + \int_{x}^{1}\alpha(z)dz\\
		\end{split}
	\end{equation*}
	
	Table \ref{table:PTTech} summarize the ride sharing technologies of these highlighted models.
			
	\begin{table}[H]
		\caption{The ride sharing technologies parameters of the highlighted models.}
		\label{table:PTTech}
		\begin{minipage}{\columnwidth}
			\begin{center}
				\begin{tabular}{c c c}
					\toprule
					Model & RS marginal & Passenger's share \\
					& road load ($\kappa$) & of car toll($\tau$) \\
					\hline
					Bus (of infinite capacity)\footnote{Note that one can view the bus model as the limit of the shared ride models when $K$ goes to infinity.} & 0 & 0 \\
					(Non-tolled) Shared Rides with carpool capacity $K$ & $\frac{1}{K}$ & 0 \\
					Tolled Shared Rides with carpool capacity $K$ & $\frac{1}{K}$ & $\frac{1}{K}$\\
					\bottomrule
				\end{tabular}
				\bigskip\centering
				\footnotesize
			\end{center}
		\end{minipage}
	\end{table}
			
	\subsection{Results for Ride Sharing Games}
	\label{sec:RSG-results}
			
	We next present our results for Ride Sharing Games, which generalize the results we have presented for Public-Good Congestion Games. We start by generalizing Theorem \ref{thm:increasing}, showing that if the latency derivative is non-decreasing then every equilibrium under the Pigouvian tax is  efficient. 
			
	\begin{restatable}{theorem}{thmIncDev}[Increasing Latency Derivative]
		Fix $G_{\nu}(l,\alpha,t)$. 
		If the latency function $l$ has a (left) derivative $l'$ that is non-decreasing, then $TPoA_{\nu}(l)=1$. Furthermore, for such $l$, setting the tax $t(x)=\frac{1-\kappa}{1-\tau}l'(x)$ yields $PoA_{\nu}(l,\alpha,t)=1$ for any inconvenience function $\alpha$.
		
		Table \ref{table:tollValues} specify the values of $t$ for the models highlighted in Section \ref{sec:RSG-models}.
		
		\begin{table}[H]
			\caption{The toll function that ensures $TPoA_{\nu}(l)=1$ whenever $l'$ is non decreasing, presented for the highlighted models.}
			\label{table:tollValues}
			\begin{minipage}{\columnwidth}
				\begin{center}
					\begin{tabular}{c c}
						\toprule
						Model & $t(x)$\\
						\hline
						Bus (of infinite capacity) & $l'(x)$ \\
						(Non-tolled) Shared Rides with carpool capacity  $K$ & $\frac{K-1}{K} l'(x)$ \\
						Tolled Shared Rides with carpool capacity $K$ & $l'(x)$ \\
						\bottomrule
					\end{tabular}
					\bigskip\centering
					\footnotesize
				\end{center}
			\end{minipage}
		\end{table}
	\end{restatable}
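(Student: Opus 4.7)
The plan is to mirror the proof of \emph{Theorem \ref{thm:increasing}}, tracking the effects of the ride-sharing parameters $\kappa$ and $\tau$ through the equilibrium and social-cost expressions. The core idea is unchanged: show that (i) the tax $t(x)=\tfrac{1-\kappa}{1-\tau}\,l'(x)$ forces the equilibrium condition to coincide with the first-order condition for the social optimum, and (ii) under monotonicity of $l'$ any such stationary point is in fact a global minimum of $SC_{(l,\alpha,\nu)}$. Since we only need to exhibit one tax function that achieves $PoA_{\nu}(l,\alpha,t)=1$ for every $\alpha$, bounding $TPoA_{\nu}(l)\le 1$ suffices (the lower bound is trivial).

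First, I would substitute $t(x)=\tfrac{1-\kappa}{1-\tau}\,l'(x)$ into the generalized equilibrium condition $\alpha(\hat x^{-})\ge (1-\tau)\,t(\kappa(\hat x))\ge \alpha(\hat x^{+})$. The factor $(1-\tau)$ cancels and we obtain
\begin{equation*}
\alpha(\hat x^{-})\ \ge\ (1-\kappa)\,l'(\kappa(\hat x))\ \ge\ \alpha(\hat x^{+}).
\end{equation*}
Next, I would differentiate the social-cost formula $SC_{(l,\alpha,\nu)}(x) = l(\kappa(x))+\int_{x}^{1}\alpha(z)\,dz$ with respect to $x$. Using the chain rule with $\kappa(x)=(1-\kappa)x+\kappa$, the derivative equals $(1-\kappa)\,l'(\kappa(x))-\alpha(x)$, so the first-order condition for a social optimum is precisely $(1-\kappa)\,l'(\kappa(x))=\alpha(x)$, matching the equilibrium condition above.

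Second, I would leverage monotonicity. Because $\kappa(\cdot)$ is affine with positive slope $1-\kappa>0$ and $l'$ is non-decreasing, the composite $x\mapsto (1-\kappa)\,l'(\kappa(x))$ is non-decreasing, while $\alpha$ is non-increasing. So for any equilibrium $\hat x$ and any $x<\hat x$, the chain $\alpha(x)\ge \alpha(\hat x^{-})\ge (1-\kappa)\,l'(\kappa(\hat x))\ge (1-\kappa)\,l'(\kappa(x))$ gives $\alpha(z)-(1-\kappa)\,l'(\kappa(z))\ge 0$ on $[x,\hat x]$, hence
\begin{equation*}
SC(x)-SC(\hat x)=\int_{x}^{\hat x}\bigl(\alpha(z)-(1-\kappa)\,l'(\kappa(z))\bigr)\,dz\ \ge\ 0.
\end{equation*}
The symmetric argument handles $x>\hat x$, and I would finally conclude $SC(\hat x)=SC(x^{*})$ for every equilibrium $\hat x$, yielding $PoA_{\nu}(l,\alpha,t)=1$ and therefore $TPoA_{\nu}(l)=1$.

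The potentially delicate part is getting the algebra of the two parameters right in the equilibrium identity, namely verifying the cancellation $(1-\tau)\cdot\tfrac{1-\kappa}{1-\tau}=1-\kappa$ and matching it against the factor $(1-\kappa)$ produced by the chain-rule differentiation of $l(\kappa(x))$; this is exactly the reason the tax has that form. Everything else follows the structure of Theorem \ref{thm:increasing} with $l'(x)$ replaced by $(1-\kappa)\,l'(\kappa(x))$, and with the endpoint/left-limit cases for $\alpha$ handled via the $\alpha(\hat x^{-}),\alpha(\hat x^{+})$ inequalities just as in that earlier proof.
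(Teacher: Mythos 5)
Your proposal is correct and follows essentially the same route as the paper's proof: substitute the tax into the generalized equilibrium condition so that $(1-\tau)$ cancels, observe that $x\mapsto(1-\kappa)\,l'(\kappa(x))$ is non-decreasing, and show via the integral comparison $SC(x)-SC(\hat x)=\int_{x}^{\hat x}\bigl(\alpha(z)-(1-\kappa)\,l'(\kappa(z))\bigr)\,dz\ge 0$ (and its symmetric counterpart) that every equilibrium is socially optimal. Your handling of the one-sided limits $\alpha(\hat x^{-}),\alpha(\hat x^{+})$ in the monotonicity chain is, if anything, slightly more careful than the paper's.
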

			
	We next generalize Theorem \ref{thm:bus-gamma}, showing that if the latency derivative is  $\gamma$-approximately weakly increasing  then the  toll price of anarchy is at most $\gamma$. 
	\begin{restatable}{theorem}{thmAppInc}[$\gamma$-approximately increasing $l'$]
		Fix $G_{\nu}(l,\alpha,t)$. If the (left) derivative of the externality function $l'$ is $\gamma$-approximately weakly increasing with an integrable function $L'$, then $TPoA_{\nu}(l)\le \gamma$. Furthermore, for such $l$, setting the tax function to be $t(x)=\frac{1-\kappa}{1-\tau}L'(x)$, yields $PoA(l,\alpha,t)\le\gamma$ for every inconvenience function $\alpha$.
	\end{restatable}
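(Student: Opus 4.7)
The plan is to mimic the proof of Theorem~\ref{thm:bus-gamma} in the ride-sharing setting. I will fix the tax $t(x) = \frac{1-\kappa}{1-\tau}\,L'(x)$ from the statement and introduce an auxiliary externality $L$ by $L(x) = l(0) + \int_0^x L'(z)\,dz$, so that $L(0) = l(0)$ and the $\gamma$-approximate monotonicity of $l'$ becomes the clean pointwise sandwich $L'(y) \le l'(y) \le \gamma\,L'(y)$. I will then compare two ride-sharing games that share the same $\nu = \{\kappa,\tau\}$, $\alpha$, and $t$: the actual game $G_l := G_\nu(l,\alpha,t)$ and the auxiliary game $G_L := G_\nu(L,\alpha,t)$. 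The factor $\frac{1-\kappa}{1-\tau}$ is chosen precisely so that the RSG equilibrium condition $\alpha(\hat{x}^-) \ge (1-\tau)\,t(\kappa(\hat{x})) \ge \alpha(\hat{x}^+)$ reduces to $\alpha(\hat{x}^-) \ge (1-\kappa)\,L'(\kappa(\hat{x})) \ge \alpha(\hat{x}^+)$, a condition independent of whether the underlying externality is $l$ or $L$.

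The first step will be to conclude from the last observation that the equilibrium sets of $G_l$ and $G_L$ coincide. In $G_L$, since $L'$ is non-decreasing, the generalized Increasing Latency Derivative theorem (stated immediately above) applied to $G_L$ with this same $t$ implies that every equilibrium of $G_L$ is a social optimum of $G_L$; consequently every equilibrium $\hat{x}$ of $G_l$ satisfies $SC_{G_L}(\hat{x}) = SC_{G_L}(x^*_L) \le SC_{G_L}(x^*_l)$.

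The second step will be to transfer between social costs in the two games using the pointwise derivative bound. Integrating $L' \le l' \le \gamma\,L'$ on $[0,\kappa(x)]$, together with $L(0)=l(0)\ge 0$ and $\gamma\ge 1$, gives $L(\kappa(x)) \le l(\kappa(x)) \le \gamma\,L(\kappa(x))$. Since the $\int_x^1 \alpha(z)\,dz$ contribution is common to both social costs and non-negative, this lifts to the pointwise sandwich $SC_{G_L}(x) \le SC_{G_l}(x) \le \gamma\,SC_{G_L}(x)$ for every $x \in [0,1]$. Chaining this with Step~1 yields
\[
SC_{G_l}(\hat{x}) \;\le\; \gamma\,SC_{G_L}(\hat{x}) \;\le\; \gamma\,SC_{G_L}(x^*_l) \;\le\; \gamma\,SC_{G_l}(x^*_l),
\]
so $PoA(l,\alpha,t) \le \gamma$ for every $\alpha$, giving $TPoA_\nu(l) \le \gamma$.

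The place I expect to need care is the interaction of the affine change of variable $\kappa(x) = (1-\kappa)x + \kappa$ inside $l(\kappa(x))$ with the $\gamma$-approximate monotonicity of $l'$. Since the pointwise inequality $L \le l \le \gamma L$ holds on all of $[0,1]$, post-composition with the monotone affine map $\kappa(\cdot)$ preserves it without any spurious factors, so beyond verifying that the scaled tax $\frac{1-\kappa}{1-\tau}L'$ is exactly what aligns the equilibrium condition of $G_L$ with its social-optimum condition, the argument is a direct transcription of the bus-model proof.
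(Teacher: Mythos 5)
Your proposal is correct and follows essentially the same route as the paper's proof: compare $G_\nu(l,\alpha,t)$ with the auxiliary game $G_\nu(L,\alpha,t)$ under the scaled tax $t=\frac{1-\kappa}{1-\tau}L'$, which makes every equilibrium a social optimum of the $L$-game, and transfer costs between the two games via $L'\le l'\le\gamma L'$ (the choice of anchoring $L(0)=l(0)$ versus the paper's $L(\kappa)=l(\kappa)$ is immaterial). Your version is in fact slightly tighter at one point: by establishing the pointwise sandwich $SC_{G_L}(x)\le SC_{G_l}(x)\le\gamma\,SC_{G_L}(x)$ and noting that the two games have identical equilibrium sets, you get $SC_{G_l}(\hat{x})\le\gamma\,SC_{G_L}(\hat{x})=\gamma\,SC_{G_L}(x^*_L)$ directly, sidestepping the paper's chain, which opens with the equality $SC_{G_l}(\hat{x})=SC_{G_l}(x^*_L)$ — a statement about $G_l$-costs of all equilibria that does not literally follow from their being optima of $G_L$ (only their $G_L$-costs need coincide), even though the paper's final bound is of course still valid once that step is replaced by exactly the inequality you use.
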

			
	Finally, we generalize Theorem \ref{thm:bus-upper} and Theorem \ref{thm:bus-lower}, that tightly bound the taxed price of anarchy when the ratio between the maximal and minimal value of the externality derivative is bounded.
			
	\begin{restatable}{theorem}{thmUppBnd}[Bounded Latency Derivative Upper Bound]\label{thm:gen-lower}
		Fix $\nu$. Given $H>0$, for latency $l$ assume that the (left) latency function derivative $l'$ satisfies $\frac{\sup_{0<x\le1} l'(\kappa(x))}{\inf_{0<x\le1} l'(\kappa(x))} \leq  H$.
		Then, $TPoA_{\nu}(l)\le \sqrt{H}$.
	\end{restatable}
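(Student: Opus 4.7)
The plan is to mirror the proof of Theorem \ref{thm:bus-upper}, adjusting for the two extra parameters $\kappa$ and $\tau$ that enter through the load transformation $\kappa(x)=(1-\kappa)x+\kappa$ and the equilibrium condition $\alpha(\hat{x}^{-})\ge (1-\tau)\,t(\kappa(\hat{x}))\ge\alpha(\hat{x}^{+})$. The key observation is that since $\frac{d}{dx}l(\kappa(x))=l'(\kappa(x))\cdot(1-\kappa)$, the social cost decomposes as
\begin{equation*}
SC(x)=l(\kappa)+\int_{0}^{x}l'(\kappa(z))(1-\kappa)\,dz+\int_{x}^{1}\alpha(z)\,dz,
\end{equation*}
so all the integrals that appear carry an extra factor of $(1-\kappa)$ compared to the bus case, and the equilibrium identity uses $(1-\tau)\,t(\kappa(\hat{x}))$ rather than $t(\hat{x})$.

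Guided by this, I propose the constant tax $t(x)=\frac{(1-\kappa)\sqrt{H}}{1-\tau}\cdot \inf_{0<y\le 1}l'(\kappa(y))$. With this choice, the equilibrium inequalities translate to
\begin{equation*}
\alpha(\hat{x}^{-})\ \ge\ (1-\kappa)\sqrt{H}\cdot\inf_{0<y\le 1}l'(\kappa(y))\ \ge\ \alpha(\hat{x}^{+}),
\end{equation*}
and in particular $(1-\kappa)\sqrt{H}\cdot\inf l'(\kappa(y))$ plays exactly the role that $\sqrt{H}\cdot\inf l'$ played in Theorem \ref{thm:bus-upper}. Fix any social optimum $x^{*}$ and any equilibrium $\hat{x}$; as in the bus proof, the case $x^{*}=\hat{x}$ is trivial, so I split into $x^{*}<\hat{x}$ and $x^{*}>\hat{x}$.

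In the first case, writing $SC(x^{*})$ and $SC(\hat{x})$ out and cancelling the common pieces $l(\kappa)+\int_{0}^{x^{*}}l'(\kappa(z))(1-\kappa)dz+\int_{\hat{x}}^{1}\alpha(z)dz$, the inequality $\frac{x+a}{y+a}\le\frac{x}{y}$ reduces $PoA$ to
\begin{equation*}
\frac{\int_{x^{*}}^{\hat{x}}l'(\kappa(z))(1-\kappa)\,dz}{\int_{x^{*}}^{\hat{x}}\alpha(z)\,dz}\ \le\ \frac{(1-\kappa)\sup l'(\kappa(y))}{\alpha(\hat{x}^{-})}\ \le\ \frac{(1-\kappa)\sup l'(\kappa(y))}{(1-\kappa)\sqrt{H}\inf l'(\kappa(y))}\ \le\ \frac{H}{\sqrt{H}}=\sqrt{H},
\end{equation*}
using $\alpha(z)\ge\alpha(\hat{x}^{-})$ on the integration range and then the equilibrium lower bound on $\alpha(\hat{x}^{-})$. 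The symmetric case $x^{*}>\hat{x}$ uses the upper equilibrium bound $\alpha(\hat{x}^{+})\le (1-\kappa)\sqrt{H}\inf l'(\kappa(y))$ together with $l'(\kappa(z))\ge \inf l'(\kappa(y))$ to get
\begin{equation*}
\frac{\int_{\hat{x}}^{x^{*}}\alpha(z)\,dz}{\int_{\hat{x}}^{x^{*}}l'(\kappa(z))(1-\kappa)\,dz}\ \le\ \frac{\alpha(\hat{x}^{+})}{(1-\kappa)\inf l'(\kappa(y))}\ \le\ \sqrt{H}.
\end{equation*}

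The main place to be careful, and the only step where the generalization is non-trivial, is in selecting the constant $t_0$ so that the two cases balance: one needs $(1-\tau)t_0$ simultaneously to upper-bound $\alpha(\hat{x}^{+})$ well relative to $\inf l'(\kappa(\cdot))$ (second case) and to lower-bound $\alpha(\hat{x}^{-})$ well relative to $\sup l'(\kappa(\cdot))$ (first case). The geometric-mean choice works precisely because $\sqrt{H}\inf\ge \sup/\sqrt{H}$ under the hypothesis $\sup/\inf\le H$, so both inequalities become tight at the same $\sqrt{H}$. Everything else is bookkeeping involving the factor $(1-\kappa)$.
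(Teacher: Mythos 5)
Your proposal is correct and follows essentially the same route as the paper's proof: the same constant tax $t(x)=\frac{1-\kappa}{1-\tau}\sqrt{H}\cdot\inf_{0<y\le 1}l'(\kappa(y))$, the same case split on $x^{*}\lessgtr\hat{x}$, and the same cancellation and bounding of the residual integrals via the equilibrium condition $\alpha(\hat{x}^{-})\ge(1-\tau)t(\kappa(\hat{x}))\ge\alpha(\hat{x}^{+})$. No substantive differences.
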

			
	\begin{restatable}{theorem}{thmLowBnd}[Bounded Latency Derivative Lower Bound]
		Fix $\nu$. For any $H>0$ there exists a latency function $l$ with (left) derivative $l'$ satisfying $\frac{\sup_{0<x\le1} l'(\kappa(x))}{\inf_{0<x\le1} l'(\kappa(x))}\le H$, for which $TPoA_{\nu}(l)\ge\sqrt{H}$.
	\end{restatable}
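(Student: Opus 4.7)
The plan is to adapt the construction from Theorem~\ref{thm:bus-lower} (which handled the special case $\nu=\{0,0\}$) to the general ride-sharing setting, being careful about two things: (i) the externality now enters as $l(\kappa(x))$ with $\kappa(x)=(1-\kappa)x+\kappa$, so a jump in $l'(\kappa(\cdot))$ at $x=2\epsilon$ corresponds to a jump in $l'$ at $y=\kappa(2\epsilon)$; and (ii) the equilibrium condition becomes $\alpha(\hat x^-)\ge(1-\tau)\,t(\kappa(\hat x))\ge\alpha(\hat x^+)$, so the ``effective'' tax at the candidate bad equilibrium $\hat x=\epsilon$ is $T:=(1-\tau)\,t(\kappa(\epsilon))$.

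Concretely, fix $\Delta=1/H$ and small $\epsilon>0$, and define $l'(y)=1$ for $y\le\kappa(2\epsilon)$ and $l'(y)=\Delta$ for $y>\kappa(2\epsilon)$, with $l(0)=0$. A direct computation using $SC_{(l,\alpha,\nu)}(x)=l(\kappa(x))+\int_x^1\alpha(z)\,dz$ gives $l(\kappa(x))=(1-\kappa)x$ for $x\le 2\epsilon$ and $l(1)=(1-\kappa)[2\epsilon+(1-2\epsilon)\Delta]$. The derivative ratio on the relevant range is $1/\Delta=H$, as required. As in the bus case, I will force $\hat x=\epsilon$ to be an equilibrium by choosing $\alpha$ so that $\alpha(\epsilon^-)=\alpha(\epsilon^+)=T$, and then split on whether $T$ exceeds a threshold. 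The key observation is that the threshold must be rescaled from $\sqrt\Delta$ to $(1-\kappa)\sqrt\Delta$ in order to equalize the two case ratios.

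In the first case $T\ge(1-\kappa)\sqrt\Delta$, I take $\alpha(x)=T$ for $x<1$ and $\alpha(1)=0$. Then $SC(\epsilon)\ge T(1-\epsilon)$ while $SC(x^*)\le SC(1)=(1-\kappa)[2\epsilon+(1-2\epsilon)\Delta]$, so letting $\epsilon\to 0$ yields
\[
\frac{SC(\epsilon)}{SC(1)}\longrightarrow\frac{T}{(1-\kappa)\Delta}\ge\frac{(1-\kappa)\sqrt\Delta}{(1-\kappa)\Delta}=\sqrt H.
\]
In the second case $T<(1-\kappa)\sqrt\Delta$, I take $\alpha(x)=T$ for $x\le\epsilon+\epsilon^2$ and $\alpha(x)=0$ afterward. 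Then $SC(\epsilon)\ge(1-\kappa)\epsilon$ while $SC(x^*)\le SC(0)=T(\epsilon+\epsilon^2)$, giving
\[
\frac{SC(\epsilon)}{SC(0)}\longrightarrow\frac{1-\kappa}{T}>\frac{1-\kappa}{(1-\kappa)\sqrt\Delta}=\sqrt H.
\]

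The main obstacle, and the only place the general $\nu$ setting genuinely differs from the bus proof, is identifying the correct case-split threshold. The factor $(1-\kappa)$ appears because the chain rule turns $l'$ into $(1-\kappa)l'(\kappa(\cdot))$ inside the social-cost integrals, shrinking the externality contribution by a factor of $1-\kappa$; meanwhile the factor $(1-\tau)$ is absorbed into the definition of $T$ via the new equilibrium condition. Balancing the two ratios above forces $T^2=(1-\kappa)^2\Delta$, which pins down the threshold uniquely. Once that is in place, the remainder of the proof is a routine repetition of the bus-case computation with $l(\kappa(x))$ in place of $l(x)$ and $T$ in place of $t(\epsilon)$; notably, $\tau$ itself never appears explicitly in the final ratio, since it is entirely encoded in the definition of $T$.
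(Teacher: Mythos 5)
Your construction is essentially the paper's own proof: the same two-valued $l'$ with the jump at load $\kappa(2\epsilon)$, the same two choices of $\alpha$, and the same case split on the effective toll $T=(1-\tau)t(\kappa(\epsilon))$; the only cosmetic difference is that the paper rescales $l'$ by $\tfrac{1}{1-\kappa}$ so that $(1-\kappa)l'(\kappa(x))\in\{1,\Delta\}$ and keeps the threshold at $\sqrt{\Delta}$, whereas you keep $l'\in\{1,\Delta\}$ and move the threshold to $(1-\kappa)\sqrt{\Delta}$ — an equivalent normalization.

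One inconsistency to fix: you set $l'(y)=1$ on all of $[0,\kappa(2\epsilon)]$, which includes $[0,\kappa]$, so with $l(0)=0$ you get $l(\kappa)=\kappa$ and hence $l(\kappa(x))=\kappa+(1-\kappa)x$, not $(1-\kappa)x$ as you claim. Taken literally this breaks the bound, since the additive constant $\kappa$ appears in both $SC(\hat x)$ and $SC(x^*)$ and caps the ratio far below $\sqrt{H}$ when $\kappa$ is bounded away from $0$. The repair is immediate — set $l'=0$ on $[0,\kappa]$ (as the paper does), which is permitted because the hypothesis only constrains $l'(\kappa(x))$ for $x\in(0,1]$, i.e., $l'$ on $(\kappa,1]$ — and all of your subsequent computations are already consistent with $l(\kappa)=0$.
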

			
	Full proofs for the generalized theorems can be found in Appendix \ref{appendix:GenResults}.
			
	\section{Externality on Public Good Users Only (Train)}
	\label{sec:train}
	
	In this section we consider settings in which the population that does not use the public good does not suffer the externality imposed by public-good users. 
	It will be convenient to think of this model as the train model. In this model passengers may take a train instead of a private car. 
	Train riders do not cause any congestion on the road, nor do they incur the congestion costs or pay any tolls. Under the same definitions for the latency, toll and individual cost functions as in our original 'bus' model, the personal cost in this model is
	\begin{gather*}
		c_{CAR}(i,x) = l(x) + t(x) \quad\quad \text{when $i$ rides a private car}\\
		c_{TRAIN}(i,x) = \alpha(i) \quad\quad \text{when $i$ rides the train}\\
	\end{gather*}
	Which means the equilibrium condition is
	\begin{equation}
		\alpha(\hat{x}^{-}) \ge  l(\hat{x}) + t(\hat{x}) \ge \alpha(\hat{x}^{+})\\
	\end{equation}
	and the social cost, which is given by integrating the personal cost over all $i\in[0,1]$, while omitting the tolls part in the cost, is given by
	\begin{equation}
		\label{eqn:GeneralSocialCost}
		SC_{(l,\alpha)}(x) = x\cdot l(x) + \int_{x}^{1}\alpha(z)dz
	\end{equation}
	
	We prove that even for this model the TPoA is unbounded:
			
	\begin{theorem}
		In the train model, for any $Z>1$ there exists a non-decreasing latency function $l$, for which $TPoA(l)\ge Z$.
	\end{theorem}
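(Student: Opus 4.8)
The plan is to produce, for every target $Z>1$, a single non-decreasing latency $l$ for which \emph{every} tax $t$ admits an $\alpha$ with $PoA(l,\alpha,t)\ge Z$; since $TPoA(l)=\inf_t\sup_\alpha PoA(l,\alpha,t)$, this gives $TPoA(l)\ge Z$ and hence unboundedness. The essential new feature of the train model is that train riders pay no latency, so the social cost is $SC(x)=x\,l(x)+\int_x^1\alpha(z)\,dz$ with a multiplicative factor $x$. Because of this factor the bounded construction behind Theorem~\ref{thm:bus-lower} does not transplant: its low-tax attack relied on road-avoiding riders still paying latency (which now vanishes), and the rescaled cost $x\,l(x)$ cannot be made to mimic that profile without forcing $l$ to be non-monotone. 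Instead I would use a latency with a \emph{cliff}: fix $c=\tfrac12$ and a small $\eta>0$ and set
\begin{equation*}
  l(x)=\begin{cases} 0 & 0\le x\le c-\eta,\\ \frac{x-(c-\eta)}{\eta} & c-\eta<x\le c,\\ 1 & c<x\le 1,\end{cases}
\end{equation*}
which is continuous, non-negative, non-decreasing, with left derivative $1/\eta$ only on $(c-\eta,c]$.

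Fixing an arbitrary $t$, I would make $\hat x=c$ an equilibrium (its barrier is $A:=l(c)+t(c)=1+t(c)$, so by Observation~\ref{obs:EqCond} it suffices that $\alpha(c^-)\ge A\ge\alpha(c^+)$) and split on $t(c)$. In the high-tax case $t(c)\ge 2Z$ I would take $\alpha\equiv A$ on $[0,1)$ with $\alpha(1)=0$: then $SC(c)=c+A(1-c)$ is dominated by expensive train riders while the optimum is at most $SC(1)=l(1)=1$, giving $PoA\ge A(1-c)\ge(1+2Z)/2\ge Z$. In the low-tax case $t(c)<2Z$ I would take $\alpha\equiv A$ on $[0,c]$ and $\alpha\equiv 0$ on $(c,1]$: then $c$ is still an equilibrium but now $SC(c)=c\,l(c)=c$, and I would compare it to the point $x=c-\eta$ just below the cliff, where $SC(c-\eta)=(c-\eta)\,l(c-\eta)+A\eta=A\eta$.

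The second comparison is the heart of the matter, and the only place I expect real subtlety. The recurring obstacle is that forcing $c$ to be an equilibrium demands $\alpha\ge A$ to its left, which normally inflates the optimum and destroys the ratio; the cliff is exactly what decouples them, since at $x=c-\eta$ the road mass pays latency $l(c-\eta)=0$ and only a sliver of width $\eta$ is displaced onto the train. Hence $PoA\ge SC(c)/SC(c-\eta)=c/(A\eta)>c/((1+2Z)\eta)$, which exceeds $Z$ as soon as $\eta\le c/((1+2Z)Z)$. Choosing $\eta$ this small makes both cases yield $PoA\ge Z$, so $TPoA(l)\ge Z$, and since $Z$ is arbitrary the taxed price of anarchy is unbounded. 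I would close by noting that the argument needs only that $c$ is \emph{some} equilibrium (so $\sup_{EQ}SC$ can only be larger) and that the optimum is bounded above by the cost at one explicit point, so no uniqueness or full description of $EQ$ is required.
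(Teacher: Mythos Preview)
Your proof is correct and follows essentially the same strategy as the paper's: a latency with a sharp cliff, a case split on the tax at the cliff, comparison to $x=1$ when the tax is high and to the point just below the cliff when the tax is low. The only differences are cosmetic---the paper places the cliff near a small $\epsilon$ with plateau height $\Delta=1/Z$ and takes $\epsilon\to 0$, whereas you put the cliff at $c=1/2$ with plateau height $1$ and choose $\eta$ directly---and your citation of Observation~\ref{obs:EqCond} should point instead to the train-model equilibrium condition $\alpha(\hat x^-)\ge l(\hat x)+t(\hat x)\ge\alpha(\hat x^+)$, though you apply the correct condition.
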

			
	\begin{proof}
		Given $Z>1$, let us define $\Delta=\frac{1}{Z}$ and choose some small enough $\epsilon>0$, 
		for which we define $\delta=\epsilon\Delta^{2}$ and look at the following $l'(x)$:
		\begin{equation*}
			l'(x) = 
			\begin{cases}
				0 & \quad 0 \le x \le  \epsilon - \delta \\
				\frac{\Delta}{\delta} & \quad \epsilon - \delta < x \leq \epsilon\\
				0 & \quad \epsilon < x \le 1 \\
			\end{cases}
		\end{equation*}
		
		Note that by defining $l'$ we define $l$ up to a constant. We define $l(0)=0$, which uniquely define $l$. Clearly, $l$ is a valid latency function.
		By way of definition of TPoA, it is enough to show that for every given toll function $t$, there exists an individual cost function $\alpha$ for which the PoA is greater than $H$. In order to do that, we will split the proof to two cases, building $\alpha$ according to the value of $t$ at the point $x=\epsilon$.
				
		\subsubsection*{First Case: $l(\epsilon) + t(\epsilon) \ge 1 $}
		If $l(\epsilon) + t(\epsilon) \ge 1 $, we will define the following individual cost function $\alpha$:
		\begin{equation*}
			\alpha(x) = 
			\begin{cases}
				l(\epsilon) + t(\epsilon) & \quad x< 1 \\
				0 & \quad x=1 \\
			\end{cases}
		\end{equation*}
		
		As $\alpha$ is continuous at $\epsilon$ and $\alpha(\epsilon)= l(\epsilon) + t(\epsilon)$, the point $\hat{x}=\epsilon$ is an equilibrium (although not necessarily the only one). Using the fact that $l(x) = l(0)+ \int_{0}^{x}l'(z)dz$, we get
		\begin{equation*}
			SC(x^{*})\le SC(1) = 1\cdot l(1) = l(0) + \int_{0}^{1}l'(z)dz = 0 + \int_{\epsilon-\delta}^{\epsilon}l'(z)dz = \delta\cdot \frac{\Delta}{\delta}=\Delta
		\end{equation*}
		as well as
		\begin{equation*}
			\begin{split}
				\max_{\hat{y}\in EQ}SC(\hat{y})&\ge SC(\hat{x})=SC(\epsilon) = \epsilon\cdot l(\epsilon) + \int_{\epsilon}^{1} \alpha(z) dz \\
				&= \epsilon\cdot\delta\cdot\frac{\Delta}{\delta} + (1-\epsilon)\cdot \Big(l(\epsilon)+t(\epsilon)\Big) \ge  \epsilon\Delta\ + (1-\epsilon)\cdot 1 \ge 1-\epsilon\\
			\end{split}
		\end{equation*}
		Thus, the taxed price of anarchy for $l$ satisfies:
		\begin{equation*}
			TPoA(l)\ge \frac{SC(\epsilon)}{SC(1)} \geq  
			\frac{1-\epsilon}{\Delta} \overset{\epsilon\rightarrow0}{\longrightarrow} \frac{1}{\Delta} = Z
		\end{equation*}
				
		\subsubsection*{Second case: $0 \le l(\epsilon) + t(\epsilon) < 1 $}
		If $0 \le l(\epsilon) + t(\epsilon) < 1 $, then define $\alpha$ the following way:
		\begin{equation*}
			\alpha(x) = 
			\begin{cases}
				t(\epsilon) + l(\epsilon) & \quad 0\le x\le \epsilon \\
				0 & \quad \epsilon < x \le 1 \\
			\end{cases}
		\end{equation*}
		By definition, the point $\hat{x}=\epsilon$ is an equilibrium (although not necessarily the only one). Using the fact that $l(x) = l(0)+ \int_{0}^{x}l'(z)dz$, we get
		\begin{gather*}
			SC(x^{*})\le SC(\epsilon-\delta) = (\epsilon-\delta)\cdot l(\epsilon-\delta) + \int_{\epsilon-\delta}^{1}\alpha(z)dz = (\epsilon-\delta)\cdot0 + \delta\cdot\Big(l(\epsilon)+t(\epsilon)\Big)\le \delta\\
			\max_{\hat{y}\in EQ}SC(\hat{y})\ge SC(\hat{x}) = SC(\epsilon) = \epsilon\cdot l(\epsilon) +\int_{\epsilon}^{1}\alpha(z)dz= \epsilon\cdot\Big(l(0)+\Delta\Big) + 0 =\epsilon\cdot\Delta
		\end{gather*}
		
		and thus the taxed price of anarchy for $l$ satisfies:
		\begin{equation*}
			TPoA(l) \ge \frac{SC(\epsilon)}{SC(\epsilon-\delta)} \geq \frac{\epsilon\Delta }{\delta} = \frac{\epsilon\Delta}{\epsilon\Delta^{2}} = \frac{1}{\Delta}= Z
		\end{equation*}
		
	\end{proof}

	\bibliographystyle{abbrvnat}
	\bibliography{references}
	\appendix
	\section{Simulations}
	\label{appendix:simulations}
	Using a simulation engine we've created, 
	we examine best response dynamics with the Pigouvian tax, and the equilibria it reaches. We simulate a large population of passengers in which in every round one passenger  is chosen at random and best responses to the current state. 
	Our simulations indicate that the dynamics might reach high cost equilibria, even when it does not start from them.
	
	Consider imposing the Pigouvian tax on the instance with the externality derivative
	\begin{equation*}
		\tilde{l}'(q) = \begin{cases}
			1 & \quad 0 \le x \le 0.12\\
			\Delta & \quad 0.12 < x \le 1\\
		\end{cases}
	\end{equation*}
	and the individual cost function
	\begin{equation*}
		\tilde{\alpha}(q) = \begin{cases}
			1.2 & \quad 0 \le x \le 0.06 \\
			0.8 & \quad 0.06 < x \le 1\\
		\end{cases}
	\end{equation*}
	
	With infinitesimal passengers  this instance has three equilibria points with the Pigouvian tax: $EQ =\{0.06, 0.12, 1\}$. Figure \ref{fig:SimulationsFunctions} illustrate the externality derivative, the individual cost function and the three Pigouvian equilibrium points.
	
	\begin{figure}[H]
		\centering
		\includegraphics[width=3.5in]{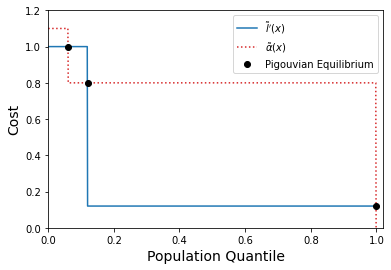}
		\caption{The externality derivative $\tilde{l}'$ and the individual cost function $\tilde{\alpha}$ used in the simulation, and the three Pigouvian equilibrium points they induce.}
		\label{fig:SimulationsFunctions}
	\end{figure} 
			
	We've created a simulation engine and have simulated a world that consists of a large number of agents (approximating a continuous population), each with the option to choose between using the public good and using the alternative. We randomly assign each of the agents an initial strategy and run a 'best-response' process in which we randomly choose an agent and update their strategy to best respond to the current state.
	All our simulations ended up with the dynamics converging to an equilibrium\footnote{We have checked that indeed each agent is best responding at the end of the process.}, and the convergence was fast. Depending on the initial number of agents starting using the public good, we see the process converges to either the optimal equilibrium point ($\hat{x}=1$) or the stable non-optimal equilibrium point ($\hat{x}=0.06$). Figure \ref{fig:Simulations} shows the convergence process, as well as the ratio between the social cost of the current state of the system and the optimal social cost. The graphs presented are from a single simulation run, however, all our executions produced similar results.
			
	\begin{figure}[H]
		\centering
		\includegraphics[width=5in]{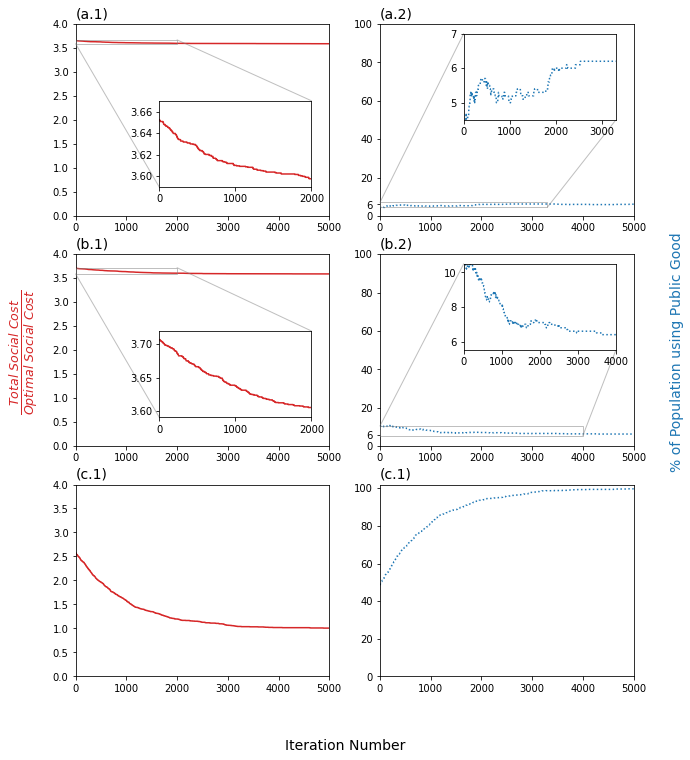}
		\caption{Three different simulations done for $\tilde{l}$ and $\tilde{\alpha}$ under the Pigouvian tax, showing different end results for a population of 1,000 agents. Figures (a.1) and (a.2) illustrates the process when $5\%$ ($x_{0}=0.05$) of the agents, chosen at random, are initially using the public good. Figure (b.1) and (b.2) illustrates the process when $10\%$ ($x_{0}=0.1$) of the agents, chosen at random, are initially using the public good. Figure (c.1) and (c.2) illustrates the process when $50\%$ ($x_{0}=0.5$) of the agents, chosen at random, are initially using the public good. While in the first two cases ($x_{0}=0.05$ and $x_{0}=0.1$) the 'best-reply' process converged to the equilibrium point $\hat{x}=0.06$ which is not optimal, in third one ($x_{0}=0.5$), the 'best-reply' process converged to the equilibrium point $\hat{x}=1$ which is optimal.}
		\label{fig:Simulations}
	\end{figure}
			
	Note that though the equilibrium point $\hat{x}=0.12$ exists, it is unstable, and even when we initialize the simulation to start in a point very close to it, the 'best-reply' process does not converge to it, as shown in Figure \ref{fig:SimulationUnstabelEq}.
			
	\begin{figure}[H]
		\centering
		\includegraphics[width=5in]{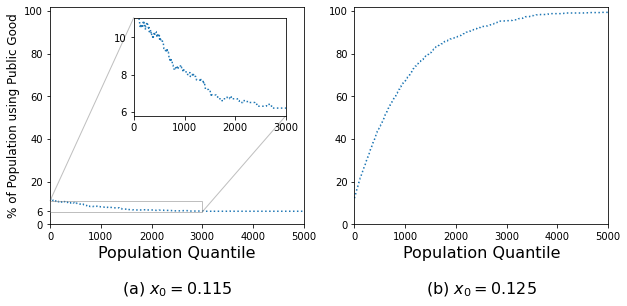}
		\caption{'Best-reply' process starting from $x_{0}=0.115$ (figure (a)) and $x_{0}=0.125$ (figure (b)). While the 'closest' equilibrium point for these two cases is $\hat{x}=0.12$, it is unstable and both scenarios converge to other, stable equilibrium points. While in the first case (figure (a)) the convergence is to $\hat{x}=0.06$, in the second case the convergence is to $\hat{x}=1$.}
		\label{fig:SimulationUnstabelEq}
	\end{figure}
			
	\section{Bounding the PoA for the Pigouvian Tax}
	\label{appendix:PigouBound}
	In this section we bound the price of anarchy for the Pigouvian tax showing it can be as large as $H$ for $\LL_H$. Formally, we prove the following claim:
	\begin{lemma}
		For a fixed constant $H>1$,  let $\LL_H$ be the family of externality functions that each satisfies  $l'(q_1)/l'(q_2) \le H$ for every $0 \le q_1, q_2 \le 1$. The price of anarchy of $\LL_H$ with the Pigouvian tax is $H$. 
	\end{lemma}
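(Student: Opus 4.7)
The plan is to prove matching upper and lower bounds of $H$ for the PoA of the Pigouvian tax $t(x)=l'(x)$ on $\mathcal{L}_H$.

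For the upper bound, I would mimic the two-case split in the proof of Theorem \ref{thm:bus-upper}, comparing an arbitrary equilibrium $\hat{x}$ to a social optimum $x^*$. Writing out $SC(x^*)$ and $SC(\hat{x})$ with the common terms $l(0)+\int_0^{\min(\hat{x},x^*)} l'$ and $\int_{\max(\hat{x},x^*)}^1 \alpha$ factored out, the inequality $\frac{x+a}{y+a}\le \frac{x}{y}$ reduces the ratio to either $\frac{\int_{x^*}^{\hat{x}} l'(z)\,dz}{\int_{x^*}^{\hat{x}} \alpha(z)\,dz}$ (when $x^*<\hat{x}$) or $\frac{\int_{\hat{x}}^{x^*} \alpha(z)\,dz}{\int_{\hat{x}}^{x^*} l'(z)\,dz}$ (when $x^*>\hat{x}$). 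Now, instead of using the constant tax bound, I use the Pigouvian equilibrium condition $\alpha(\hat{x}^-)\ge l'(\hat{x})\ge \alpha(\hat{x}^+)$ from Observation \ref{obs:EqCond}: in the first case $\alpha(z)\ge l'(\hat{x})$ throughout $[x^*,\hat{x}]$, while $l'(z)\le \sup l' \le H\cdot l'(\hat{x})$ by the bounded-ratio assumption; symmetrically in the second case $\alpha(z)\le l'(\hat{x})$ and $l'(z)\ge \inf l' \ge l'(\hat{x})/H$. Either way, the ratio is at most $H$.

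For the lower bound, I would recycle the externality function used in Theorem \ref{thm:bus-lower}: fix small $\epsilon>0$ and set $l'(x)=1$ on $[0,2\epsilon]$ and $l'(x)=\Delta:=1/H$ on $(2\epsilon,1]$, with $l(0)=0$. The Pigouvian tax is $t=l'$. Take the adversarial individual cost function $\alpha(x)=1$ for $x<1$ and $\alpha(1)=0$. Using Observation \ref{obs:EqCond}, the equilibrium set is exactly $[0,2\epsilon]\cup\{1\}$: on $[0,2\epsilon]$ we have $t=1$, which lies between $\alpha(\hat{x}^-)=1$ (or $+\infty$ at $0$) and $\alpha(\hat{x}^+)=1$; at $\hat{x}=1$, $t(1)=\Delta\le 1=\alpha(1^-)$ and trivially $\Delta\ge \alpha(1^+)=-\infty$; interior points fail because $\Delta<1$.

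It remains to compute the social costs. For any $\hat{x}\in[0,2\epsilon]$, $SC(\hat{x})=l(\hat{x})+\int_{\hat{x}}^1 1\,dz = \hat{x}+(1-\hat{x})=1$, whereas $SC(1)=l(1)=2\epsilon+(1-2\epsilon)\Delta\to \Delta$ as $\epsilon\to 0$. So $SC(x^*)\le SC(1)\to\Delta$ while some equilibrium has social cost $1$, giving
\begin{equation*}
PoA(l,\alpha,l')\ge \frac{1}{2\epsilon+(1-2\epsilon)\Delta}\xrightarrow{\epsilon\to 0}\frac{1}{\Delta}=H.
\end{equation*}
No step here is particularly delicate; the main thing to be careful about is the case analysis in the upper bound (making sure the direction of the inequalities lines up with the correct side of $\hat{x}$ in the equilibrium condition $\alpha(\hat{x}^-)\ge l'(\hat{x})\ge \alpha(\hat{x}^+)$), and in the lower bound, verifying that no stray interior equilibrium with lower social cost exists that would undercut the ratio.
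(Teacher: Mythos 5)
Your proposal is correct and follows essentially the same route as the paper: the upper bound is the same two-case decomposition around $\hat{x}$ versus $x^*$ combined with the equilibrium condition $\alpha(\hat{x}^-)\ge l'(\hat{x})\ge\alpha(\hat{x}^+)$ and the bounded derivative ratio, and the lower bound is the same step-down $l'$ construction with an (essentially) constant $\alpha$ pinned at the high derivative level, letting $\epsilon\to 0$. The only differences are cosmetic (the paper's adversarial $\alpha$ is a three-level step yielding isolated equilibria $\{\epsilon/2,\epsilon,1-\epsilon\}$ rather than your interval of equilibria), and both arguments are sound.
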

	
	\begin{proof}
		In order to prove this theorem, we prove an upper bound as well as a lower bound.
		\subsection*{Upper Bound}
		Given an externality function $l\in \LL_{H}$ and an individual cost function $\alpha$, we claim that $PoA \le H$. If the Pigouvian tax impose a single equilibrium point, then $PoA=1<H$ and we are done.
		For the case in which there is more than one equilibrium point, we will look at the worst equilibrium point $\hat{x}$ and the optimal equilibrium point $x^{*}$. We split to two cases:
		\subsubsection*{First Case : $x^{*}<\hat{x}$}
		In this case, it holds that:
		\begin{gather*}
			SC(x^{*}) = l(0) + \int_{0}^{x^{*}} \tilde{l}'(z)dz + \int_{x^{*}}^{\hat{x}} \tilde{\alpha}(z)dz + \int_{\hat{x}}^{1}\tilde{\alpha}(z)dz\\
			SC(\hat{x}) = l(0) + \int_{0}^{x^{*}} \tilde{l}'(z)dz + \int_{x^{*}}^{\hat{x}} \tilde{l}'(z)dz + \int_{\hat{x}}^{1}\tilde{\alpha}(z)dz\\
		\end{gather*}
		and using the fact that for every $x>y$ and for every $a>0$, it holds that $\frac{x+a}{y+a}\le\frac{x}{y}$, we get:
		\begin{equation*}
			\begin{split}
				PoA &= \frac{SC(\hat{x})}{SC(x^{*})} \le \frac{\int_{x^{*}}^{\hat{x}} \tilde{l}'(z)dz}{\int_{x^{*}}^{\hat{x}} \tilde{\alpha}(z)dz}
				\le \frac{\int_{x^{*}}^{\hat{x}} \sup_{0\le x\le1}\tilde{l}'(x)dz}{\int_{x^{*}}^{\hat{x}} \tilde{\alpha}(\hat{x}^{-})dx}
				\leq \frac{\sup_{0\le x\le1}\tilde{l}'(x)[\hat{x}-x^{*}]}{\tilde{\alpha}(\hat{x}^{-})[\hat{x}-x^{*}]}\\
				&=\frac{\sup_{0\le x\le1}\tilde{l}'(x)}{\tilde{\alpha}(\hat{x}^{-})}
				=\frac{\sup_{0\le x\le1}\tilde{l}'(x)}{t(\hat{x})} = \frac{\sup_{0\le x\le1}\tilde{l}'(x)}{\tilde{l}'(\hat{x})} \le
				\frac{\sup_{0\le x\le1}\tilde{l}'(x)}{\inf_{0\le x\le1}\tilde{l}'(x)} \le H\\
			\end{split}
		\end{equation*}
		
		\subsubsection*{Second Case: $x^{*}>\hat{x}$}
		In this case, it holds that:
		\begin{gather*}
			SC(x^{*}) = l(0) + \int_{0}^{\hat{x}} \tilde{l}'(z)dz + \int_{\hat{x}}^{x^{*}} \tilde{l}'(z)dz + \int_{x^{*}}^{1}\tilde{\alpha}(z)dz\\
			SC(\hat{x}) = l(0) + \int_{0}^{\hat{x}} \tilde{l}'(z)dz + \int_{\hat{x}}^{x^{*}} \tilde{\alpha}(z)dz + \int_{x^{*}}^{1}\tilde{\alpha}(z)dz\\
		\end{gather*}
		and using the fact that for every $x>y$ and for every $a>0$, it holds that $\frac{x+a}{y+a}\le\frac{x}{y}$, we get:
		\begin{equation*}
			\begin{split}
				PoA &= \frac{SC(\hat{x})}{SC(x^{*})} \le \frac{\int_{\hat{x}}^{x^{*}} \tilde{\alpha}(z)dz}{\int_{\hat{x}}^{x^{*}} \tilde{l}'(z)dz} \le \frac{\int_{\hat{x}^{+}}^{x^{*}} \tilde{\alpha}(\hat{x}^{+})dx}{\int_{\hat{x}}^{x^{*}} \inf_{0\le x\le1}l'(x)dz}
				=\frac{\tilde{\alpha}(\hat{x}^{+})[x^{*}-\hat{x}]}{\inf_{0\le x\le1}\tilde{l}'(x)[x^{*}-\hat{x}]}\\ &=\frac{\tilde{\alpha}(\hat{x}^{+})}{\inf_{0\le x\le1}\tilde{l}'(x)}
				\le\frac{t(\hat{x})}{\inf_{0\le x\le1}\tilde{l}'(x)} =\frac{\tilde{l}'(\hat{x})}{\inf_{0\le x\le1}\tilde{l}'(x)}\le\frac{\sup_{0\le x\le1}\tilde{l}'(x)}{\inf_{0\le x\le1}\tilde{l}'(x)}\leq H\\
			\end{split}
		\end{equation*}
		\subsection*{Lower Bound}
		We show an example for an externality function $l\in\LL_H$ and an individual cost function $\alpha$ for which the Pigouvian tax yields $PoA = H$. Given a fixed $H>1$, for a positive $\epsilon\ll 1$ consider the externality derivative
		\begin{equation*}
			\tilde{l}'(q) = \begin{cases}
				1 & \quad 0 \le x \le \epsilon\\
				\frac{1}{H} & \quad \epsilon < x \le 1\\
			\end{cases}
		\end{equation*}
		and the individual cost function
		\begin{equation*}
			\tilde{\alpha}(q) = \begin{cases}
				1 + \epsilon & \quad 0 \le x \le \frac{\epsilon}{2} \\
				1 - \epsilon & \quad \frac{\epsilon}{2} < x \le 1 - \epsilon\\
				0 & \quad 1 - \epsilon < x \le 1\\
			\end{cases}
		\end{equation*}
		Imposing the Pigouvian tax creates three possible equilibrium points: $EQ = \{\frac{\epsilon}{2},\epsilon,1-\epsilon\}$. The optimal equilibrium point is $\hat{x}=1-\epsilon$, for which the social cost is $SC(1-\epsilon) = \epsilon +(1-2\epsilon)\cdot \frac{1}{H}$. The worst equilibrium point is $\hat{x}=\epsilon$, for which the social cost is $SC(\epsilon) = \epsilon + (1-\epsilon)\cdot(1-2\epsilon)=1-2\epsilon+2\epsilon^{2}$. This means the price of anarchy is
		\begin{equation*}
			PoA = \frac{1-2\epsilon+2\epsilon^{2}}{\epsilon +(1-2\epsilon)\cdot \frac{1}{H}} = H \cdot \frac{1-2\epsilon+2\epsilon^{2}}{1-2\epsilon +\epsilon H}
		\end{equation*}
		Taking $\epsilon\longrightarrow0$, we get that $2\epsilon^2\ll\epsilon H$ and we get that
		\begin{equation*}
			PoA \overset{\epsilon\rightarrow0}{\longrightarrow} H
		\end{equation*}
	\end{proof}
			
	\section{Subsidy Inefficiency}
	\label{appendix:Subsidy}
	As we defined the tax function to be only positive, one could wonder if subsidy may help. while this is certainly not intuitive, in principle that might happen. In this section we show there is never a reason to choose a toll that is negative.
	\begin{lemma}
		\label{lemma:subsidy}
		Consider a Public-Good Congestion Game with externality function $l$. For every continuous tax $t:[0,1]\longrightarrow\mathbb{R}$ (that may use subsidies), there exists a non-subsidizing tax function $\tilde{t}:[0,1]\longrightarrow\mathbb{R}_{\ge0}$ such that for any $\alpha$, it holds that $PoA(l,\alpha,\tilde{t})\le PoA(l,\alpha,t)$.
	\end{lemma}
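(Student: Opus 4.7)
The plan is to construct $\tilde t$ by simply truncating $t$ at zero: set $\tilde t(x) = \max\{t(x),0\}$. Then $\tilde t$ is continuous (as the pointwise maximum of two continuous functions) and non-negative, so equilibria of $G(l,\alpha,\tilde t)$ are guaranteed to exist. Since the social cost $SC_{(l,\alpha)}$, and in particular the optimal social cost $SC_{(l,\alpha)}(x^*)$, do not depend on the tax function at all, the desired inequality $PoA(l,\alpha,\tilde t) \le PoA(l,\alpha,t)$ reduces to showing that the worst-case equilibrium social cost under $\tilde t$ is at most that under $t$. The plan is to associate to every equilibrium $\hat x$ of $G(l,\alpha,\tilde t)$ some equilibrium $\hat y$ of $G(l,\alpha,t)$ satisfying $\hat y \ge \hat x$ and $SC(\hat y) \ge SC(\hat x)$.

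Fix an equilibrium $\hat x$ of $G(l,\alpha,\tilde t)$. The easy case is $t(\hat x) \ge 0$: then $\tilde t(\hat x) = t(\hat x)$, so by Observation \ref{obs:EqCond} the same $\hat x$ is already an equilibrium of $G(l,\alpha,t)$, and I take $\hat y = \hat x$. The substantive case is $t(\hat x) < 0$; here $\tilde t(\hat x) = 0$, and the equilibrium condition $0 = \tilde t(\hat x) \ge \alpha(\hat x^+)$ combined with non-negativity and monotonicity of $\alpha$ forces $\alpha \equiv 0$ on $(\hat x,1]$. To produce $\hat y$ I will consider the set $U = \{y \in [\hat x,1] : t(y) \le 0\}$, which is non-empty (it contains $\hat x$) and closed by continuity of $t$, and take $\hat y = \max U$. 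If $\hat y = 1$ then $t(1) \le 0 \le \alpha(1^-)$ while $\alpha(1^+) = -\infty$ by convention, so $\hat y$ satisfies Observation \ref{obs:EqCond} for $G(l,\alpha,t)$. Otherwise $\hat y < 1$ and by maximality $t(y) > 0$ for all $y \in (\hat y,1]$, so continuity of $t$ at $\hat y$ from the right forces $t(\hat y) = 0$; since $t(\hat x) < 0$ we must have $\hat y > \hat x$, hence $\alpha(\hat y^-) = \alpha(\hat y^+) = 0 = t(\hat y)$, again making $\hat y$ an equilibrium of $G(l,\alpha,t)$.

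To conclude, in the easy case $SC(\hat y) = SC(\hat x)$ trivially, while in the hard case $\alpha$ vanishes on $(\hat x,1]$ so
\begin{equation*}
SC(\hat y) = l(\hat y) + \int_{\hat y}^{1} \alpha(z)\,dz = l(\hat y) \ge l(\hat x) = SC(\hat x)
\end{equation*}
by monotonicity of $l$. Taking the supremum over $\hat x \in EQ(l,\alpha,\tilde t)$ and noting that each associated $\hat y$ lies in $EQ(l,\alpha,t)$ gives the desired inequality of worst-case equilibrium social costs; dividing by $SC(x^*)$ finishes the argument.

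The main obstacle will be the hard case $t(\hat x) < 0$: the matching equilibrium of $G(l,\alpha,t)$ must be manufactured rather than inherited, and continuity of $t$ is essential here --- both to make $U$ closed and to pass to the right-limit $t(\hat y) = 0$ at the upper boundary of $U$. The crucial observation that makes everything fit together is that demand must already have vanished past any subsidized equilibrium $\hat x$, which both allows the manufactured equilibrium $\hat y$ to exist and guarantees that its social cost is at least $SC(\hat x)$ via monotonicity of $l$ alone.
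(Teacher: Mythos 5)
Your proof is correct and follows essentially the same route as the paper's: truncate the tax at zero, observe that demand must vanish past any subsidized equilibrium, and use continuity of $t$ to locate an equilibrium of the original tax weakly to the right whose social cost is weakly larger by monotonicity of $l$. The only cosmetic difference is that you extract the matching equilibrium as $\max U$ rather than via the intermediate-value-theorem dichotomy the paper states.
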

	\begin{proof}
		Given $l$ and $t$, consider the tax function that simply zeros all negative taxes of $t$:
		\begin{equation*}
			\tilde{t}(x) = 
			\begin{cases}
				0 & \quad x\ s.t.\ t(x) < 0 \\
				t(x) & \quad x\ s.t.\ t(x) \ge 0 \\
			\end{cases}
		\end{equation*}
		Note that if $t$ is continuous, $\tilde{t}$ is also continuous.
		
		Given $\bar{x}$ that is an equilibrium with $\tilde{t}$, we need to show that there exist $x$ that is an equilibrium with $t$ such that $SC_{l,\alpha}(x)\ge SC_{l,\alpha}(\bar{x})$.
		According to Equation \ref{eqn:EqCond}, $\bar{x}$ is an equilibrium point if and only if $\alpha(\bar{x}^{-})\ge t(\bar{x}) \ge \alpha(\bar{x}^{+})$. If $t(\bar{x})>0$ then $\bar{x}$ is also an equilibrium in $\tilde{t}$. So the only case we need to worry about is the one in which $t(\bar{x})\le0$. In this case, due to monotonicity and non-negativity of $\alpha$, for every $x \ge \bar{x}$ it holds that $\alpha(x)=0$ and since the externality function $l$ is non-decreasing, for every $x \ge \bar{x}$, we have that $SC(x)\ge SC(\bar{x})$.
		Hence, it is enough to show that for every such point $\bar{x}$ which is an equilibrium with $\tilde{t}$ but not with $t$, there exists some point $x\ge \bar{x}$ which is an equilibrium with $t$. Note that by defining $\alpha(1^+)=-\infty$, we cover the option that $\bar{x}=1$. This leaves us with the case for which $\bar{x}<1$. But if this is the case, and since $t$ is continuous, from the intermediate value theorem, one of the following two must hold: either there is some $x > \bar{x}$ for which $t(x)=0$, or for every $x \ge \bar{x}$, we have $t(x)\le 0$. In either case, we have a point $x \ge \bar{x}$ which is an equilibrium with $t$. As we've mentioned before, for that $x$ we have $SC(x)\ge SC(\bar{x})$, and we conclude our proof.
	\end{proof}
	
	Note that the proof does not work in the non-continuous case due to a technicality: when the tax function $t$ is non-continuous, there might not exist an equilibrium point and the problem is not well defined.
	
	We've proven the inefficiency of subsidies for Public-Good Congestion Games, but with minor changes, the proof can be adapted to prove the same result for all Ride Sharing Games.
	
	\section{Ride Sharing Model Formal Definition}
	\label{appendix:GenModel}
	For completion, we next present a full formal definition for the generalized \emph{Ride Sharing Game}. This section can be read as a stand alone text, and hence contain some duplication of Section \ref{sec:Game-def}.
	
	A \emph{Ride Sharing Game (RSG)} includes a population of passengers who all wishes to go from point $A$ to a point $B$ on a single road either by using a private car or by carpooling. We assume a large population of passengers, with each passenger by itself having an infinitesimal effect on the system. This is formalized by modeling the population as a continuum, who have a total mass $1$. 
	We assume that each passenger suffer some amount of inconvenience from choosing to share a ride over riding a private car, and that amount might differ between passengers. 
	We denote by $x\in[0,1]$ the fraction of the population that rides a private car while $1-x$ is the fraction of the population that carpool.
	The \emph{Ride Sharing (RS) technology} determines the mass of vehicles on the road as a function of the fraction of population choosing private cars, and the extent in which RS passengers suffer from road latency.
	The latency on the road is a function of the total mass of vehicles on the road. 
	The social planner can impose a toll on private cars taking the road, in order to affect passengers choice of transportation option, aiming to minimize the social cost. Toll is charged for every private car on the road, yet passengers of various ride sharing vehicles might be exempts of tolls.
	
	Formally, a Ride Sharing Game $RSG(l,\alpha,t,\nu)$ is defined by three functions -  a latency function $l$, a toll function $t$ and an inconvenience function $\alpha$, and by a ride sharing technology $\nu$ as follows:
	\begin{definition}[Latency Function $l$]
		We use $l:[0,1]\longrightarrow\mathbb{R}_{\ge0}$ to denote a \emph{latency function}, with $l(x)$ being the latency, in monetary terms, experienced by every agent when the mass of vehicles on the road is $x$.
		\begin{assumption}
			We assume that a latency function $l$ satisfies the following:
			\begin{itemize}
				\item $l$ is non-negative
				\item $l$ is non-decreasing
				\item $l$ is continuous and left differentiable with left derivative denoted by $l'$
			\end{itemize}
		\end{assumption}
	\end{definition}
	
	\begin{definition}[Toll Function $t$]
		We use $t:[0,1]\longrightarrow\mathbb{R}_{\ge0}$ to denote a \emph{toll function}, with $t(x)$ being the toll that every private car taking the road needs to pay when the mass of vehicles on the road is $x$.
	\end{definition}
	Note that we do not make any assumptions on $t$ -- it can be any function (not necessarily increasing or continuous). Also note that we define the toll as non-negative (which means the social planner does not pay agents to use the public good), as we've shown in Appendix \ref{appendix:Subsidy} that a social planner that is aiming to minimize the social cost would never choose to use subsidies.
	
	\begin{definition}[Inconvenience Function $\alpha$]
		We use $\alpha:[0,1]\longrightarrow\mathbb{R}_{\ge0}$ to denote an \emph{inconvenience function} with $\alpha(x)$ being the marginal inconvenience, in monetary terms, of the infinitesimal $x$ percentile of the population when carpooling.
		As passengers with higher dissatisfaction prefer private cars, it will be convenient to order the population according to their marginal inconvenience value from higher to lower.
		\begin{assumption}
			We assume that an inconvenience function $\alpha$ satisfies the following:
			\begin{itemize}
				\item $\alpha$ is non-negative
				\item $\alpha$ is non-increasing
			\end{itemize}
		\end{assumption}
		As $\alpha$ is a monotone function, it is integrable and additionally, for every $x\in[0,1]$ it has a limit from the left as well as a limit from the right.  
		For a given $x$, we use $\alpha(x^{-})=\inf\{\alpha(t)\text{ }|\text{ }t<x\}$ to denote the limit from the left at $x$, and $\alpha(x^{+})=\sup\{\alpha(t)\text{ }|\text{ }t>x\}$ to denote the limit from the right at $x$.
	\end{definition}
	
	Note our model assumes the latency, inconvenience and toll functions are all measured in the same units of money. We  assume agents have the same value for money, and the differences in preferences between different agents are reflected in the inconvenience function.
	
	We next define the ride sharing technology $\nu$, captured by the parameters $\kappa$ and $\tau$. 
	\begin{definition}[Ride Sharing Technology $\nu=\{\kappa,\tau\}$ ] 
		We use $\nu=\{\kappa,\tau\}$ to denote the set of parameters defining the \emph{ride sharing technology} used in the game:
		\begin{itemize}
			\item $\kappa\in[0,1)$ is the marginal load a passenger ride-sharing adds to the road, when normalizing the load of a private car to $1$.\footnote{The assumption that $\kappa<1$ corresponds to the externality of a passenger in a private car being larger than that of a passenger riding a shared car.} 
			Thus, with $x$ fraction taking private cars (and $1-x$ ride sharing), the total load on the road is $x+\kappa(1-x)=(1-\kappa)x + \kappa$ which we denote by $\kappa(x)$. 
			For example, when ride sharing passengers do not add any additional load to the road (as the case for a bus) then $\kappa=0$, 
			while $\kappa=1/2$ means that a ride sharing passenger add to the road half the load of a private car passenger.\footnote{This formulation is general enough to capture ride sharing vehicles that impose larger load on the road than private cars, as long as the per-passenger load is at most $1$: for example, a minibus with $20$ passengers that have total load of $2$ (like two private cars) has a per-passenger load of $1/10$. }
			\item $\tau\in[0,1)$ is the fraction of toll charged to a private car that a passenger riding a shared car is required to pay.\footnote{The assumption that $\tau<1$ corresponds to the toll charged to a passenger of a private car being larger than the tool imposed on a passenger that rides a shared car.}\footnote{Note that the total toll paid by all carpool passengers might not necessarily be equal to the toll a private car is charged.} For example, $\tau=0$ means ride sharing passengers are exempt from paying any toll, while $\tau=1/2$ means a ride sharing passenger is required to pay half the toll of a  private car passenger.
		\end{itemize}
	\end{definition}
	\begin{definition}[Personal Cost $c_{A}$]
		We use $c_{A}:[0,1]\times[0,1]\longrightarrow\mathbb{R}_{\ge0}$ to denote a \emph{personal cost function}, with $c_{A}(i,x)$ being the personal cost (or disutility) of a passenger in the $i\in[0,1]$ percentile choosing action $A\in \{CAR, RS\}$ (riding a private car or a shared car), when $x$ fraction of the population rides private cars.
		
		This means that when the ride sharing technology is $\nu=\{\kappa,\tau\}$ the personal cost can be written as follows:
		\begin{gather*}
			c_{CAR}(i,x) = l(\kappa(x)) + t(\kappa(x)) \quad\quad \text{when $i$ rides a private car}\\
			c_{RS}(i,x) = \alpha(i) + l(\kappa(x)) + \tau\cdot t(\kappa(x)) \quad\quad \text{ when $i$ share a ride}\\
		\end{gather*}
		We assume passengers are rational: 
		each infinitesimal passenger will take the action (ride a private car or share a ride) that has lower personal cost for the passenger (breaking ties arbitrarily).
	\end{definition}
	
	We will be interested in studying games induced by various  RS technologies. For any RS technology $\nu$, we will consider the game for that fixed $\nu$ and 
	denote $G_{\nu}(l,\alpha,t) = RSG(l,\alpha,t,\nu)$.
	It is now possible to define means of analyzing Ride Sharing games, by defining equilibrium points, as well as socially optimal points.
	\begin{definition}[Equilibrium Point $\hat{x}$]
		Given $G_{\nu}(l,\alpha,t)$, $\hat{x}$ is an \emph{equilibrium point} if the following hold:
		\begin{gather*}
			\forall i\le\hat{x}: \quad c_{RS}(i,\hat{x}) \ge c_{CAR}(i,\hat{x})\\
			\forall i>\hat{x}: \quad c_{RS}(i,\hat{x}) \le c_{CAR}(i,\hat{x})\\
		\end{gather*}
		I.e., the passengers that choose to use private cars are the $\hat{x}$-fraction of the passengers whose inconvenience from ride-sharing is highest.
		Note that in the case where $\alpha$ and $t$ are continuous, any internal equilibrium point $\hat{x}\in (0,1)$ must satisfy 
		$ c_{RS}(i,\hat{x}) = c_{CAR}(i,\hat{x})$, that is, $\hat{x}$ is such that the passenger in the $\hat{x}$ percentile is indifferent between riding a private car and sharing a ride.
		
		We next present a simple characterization of equilibrium points. 
		For that characterization it is convenient to define $\alpha(0^{-})=+\infty$ and $\alpha(1^{+})=-\infty$.
		\begin{lemma}
			Given $G_{\nu}(l,\alpha,t)$, $\hat{x}$ is an equilibrium point if and only if 
			\begin{equation}
				\label{eqn:GeneralEqCond}
				\alpha(\hat{x}^{-}) \ge  (1-\tau)\cdot t(\kappa(\hat{x})) \ge \alpha(\hat{x}^{+})\\
			\end{equation}
		\end{lemma}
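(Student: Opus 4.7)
The plan is to essentially mirror the proof of Observation \ref{obs:EqCond} from the Public-Good Congestion Game, adapted to account for the ride sharing parameters $\kappa$ and $\tau$. The key algebraic observation is that when we compare the personal costs $c_{CAR}(i,\hat{x})$ and $c_{RS}(i,\hat{x})$, the common latency term $l(\kappa(\hat{x}))$ cancels, as does a $\tau$-fraction of the toll term, leaving only $\alpha(i)$ versus $(1-\tau)\cdot t(\kappa(\hat{x}))$. This is precisely what appears in the target inequality.

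I would proceed as follows. First, I would write out the equilibrium condition directly from the definition: $\hat{x}$ is an equilibrium iff for all $i\le \hat{x}<j$,
\[
\alpha(i) + l(\kappa(\hat{x})) + \tau\cdot t(\kappa(\hat{x})) \;\geq\; l(\kappa(\hat{x})) + t(\kappa(\hat{x})) \;\geq\; \alpha(j) + l(\kappa(\hat{x})) + \tau\cdot t(\kappa(\hat{x})).
\]
Subtracting $l(\kappa(\hat{x})) + \tau\cdot t(\kappa(\hat{x}))$ from all sides reduces this to $\alpha(i)\geq (1-\tau)\cdot t(\kappa(\hat{x}))\geq \alpha(j)$ for all $i\le \hat{x}<j$.

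For the forward direction, I would take the limit as $i\uparrow\hat{x}$ and $j\downarrow\hat{x}$; by monotonicity of $\alpha$, the left-hand side converges to $\alpha(\hat{x}^-)$ and the right-hand side to $\alpha(\hat{x}^+)$, yielding Equation (\ref{eqn:GeneralEqCond}). For the reverse direction, I would invoke the fact that $\alpha$ is non-increasing, so $\alpha(i)\geq \alpha(\hat{x}^-)$ for every $i\le \hat{x}$ and $\alpha(\hat{x}^+)\geq \alpha(j)$ for every $j>\hat{x}$; chaining these with the assumed inequality restores the pointwise condition, hence $\hat{x}$ is an equilibrium.

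There is really no substantial obstacle here; the lemma is a direct generalization of Observation \ref{obs:EqCond}, and the only care required is bookkeeping with the $\tau$ and $\kappa$ parameters and the boundary conventions $\alpha(0^-)=+\infty$, $\alpha(1^+)=-\infty$ (which make the statement work at $\hat{x}=0$ and $\hat{x}=1$, where one of the quantifier ranges is vacuous).
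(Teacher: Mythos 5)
Your proposal is correct and follows essentially the same route as the paper's proof: write the pointwise equilibrium condition from the personal costs, cancel the common $l(\kappa(\hat{x})) + \tau\cdot t(\kappa(\hat{x}))$ term, take one-sided limits of $\alpha$ for the forward direction, and use monotonicity of $\alpha$ to restore the pointwise condition for the converse. The only (immaterial) difference is that you simplify before taking limits whereas the paper takes limits first and then reorganizes.
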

		\begin{proof}
			By definition $\hat{x}$ is  a equilibrium point if and only if  
			\begin{equation*}
				\forall i\le\hat{x}<j : \quad \alpha(i) + l(\kappa(\hat{x})) + \tau\cdot t(\kappa(\hat{x})) \geq l(\kappa(\hat{x})) + t(\kappa(\hat{x}))\geq \alpha(j) + l(\kappa(\hat{x})) + \tau\cdot t(\kappa(\hat{x}))\\
			\end{equation*}
			As $\alpha$ is non-increasing, it has limits from both left and right at $\hat{x}$, and thus by 
			taking the limits as $i$ increases to $\hat{x}$ and as $j$ decreases to $\hat{x}$ we get that if the above holds then 
			\begin{equation*}
				\alpha(\hat{x}^{-}) + l(\kappa(\hat{x})) + \tau\cdot t(\kappa(\hat{x})) \geq l(\kappa(\hat{x})) + t(\kappa(\hat{x}))\geq \alpha(\hat{x}^{+}) +  l(\kappa(\hat{x})) + \tau\cdot t(\kappa(\hat{x}))\\
			\end{equation*}
			which is equivalent to Equation (\ref{eqn:GeneralEqCond}) by reorganizing. 
			
			To prove the other direction we observe that as $\alpha$ is non-increasing, 
			$\forall i\le\hat{x}<j$ it holds that $\alpha(i)\geq \alpha(\hat{x}^{-})\geq \alpha(\hat{x}^{+})\geq \alpha(j)$.
			Combining this with Equation (\ref{eqn:GeneralEqCond}) we get that 
			\begin{equation*}
				\forall i\le\hat{x}<j : \quad \alpha(i) +  l(\kappa(\hat{x})) + \tau\cdot t(\kappa(\hat{x})) \geq l(\kappa(\hat{x})) + t(\kappa(\hat{x}))\geq \alpha(j) + l(\kappa(\hat{x})) + \tau\cdot t(\kappa(\hat{x}))\\
			\end{equation*}
			which is equivalent to $\hat{x}$ being an equilibrium point.
		\end{proof}
		
		We denote by $EQ_{\nu}(l,\alpha,t)$ the set of equilibrium points of the game $G_{\nu}(l,\alpha,t)$. When $G_{\nu}(l,\alpha,t)$ is clear from context, we use $EQ$ to denote $EQ_{\nu}(l,\alpha,t)$.
	\end{definition}
	
	\begin{definition}[Social Cost $SC$]
		We use $SC_{(l,\alpha,\nu)}:[0,1]\longrightarrow\mathbb{R}_{\ge0}$ to denote the \emph{Social Cost} (or total disutility) function for the game  $G_{\nu}(l,\alpha,t)$, with $SC_{(l,\alpha,\nu)}(x)$ being the total amount of disutility when the $x$ fraction of the population with the highest marginal inconvenience values ride private cars in the game  $G_{\nu}(l,\alpha,t)$.
		As passengers are infinitesimal, the social cost can be given by integrating the personal cost over all $i\in[0,1]$, while omitting the tolls part in the cost, as tolls are paid to the social planner and do not affect the total social cost when considering the planner as part of society:
		\begin{equation}
			SC_{(l,\alpha,\nu)}(x) = l(\kappa(x)) + \int_{x}^{1}\alpha(z)dz = l(\kappa) + \int_{0}^{x}l'(\kappa(z))(1-\kappa)dz + \int_{x}^{1}\alpha(z)dz
		\end{equation}
		
		When $l$ and $\alpha$ are clear from context we omit them in the notation and denote  $SC_{(l,\alpha,\nu)}(x)$ by $SC_{\nu}(x)$.
	\end{definition}
	As $\alpha$ is integrable, the function $A(x)=\int_{x}^{1}\alpha(z)dz$ is continuous. $l$ is continuous as well, so the $SC$ is continuous on the compact set $[0,1]$. This means that the minimum of the function $SC$ is obtained for some $x\in[0,1]$. We will call such a point a social optimal point:
	
	\begin{definition}[Social Optimal Point $x^{*}$]
		Given a game $G_{\nu}(l,\alpha,t)$, $x^{*}$ is called a \emph{social optimal point} if its social cost is minimal:
		\begin{equation}
			x^{*} \in {\arg\min}_{x\in[0,1]}SC_{(l,\alpha,\nu)}(x)
		\end{equation}
		Note that there might be multiple social optimal points, but every such point has the same minimal social cost. 
		With a slight abuse of notation we denote that 
		\emph{optimal social cost} by $SC_{(l,\alpha,\nu)}(x^{*})$.
	\end{definition}
	
	\begin{definition}[Price of Anarchy $PoA_{\nu}(l,\alpha,t)$]
		Given a game $G_{\nu}(l,\alpha,t)$,	the \emph{Price of Anarchy $PoA_{\nu}(l,\alpha,t)$} is given by the largest ratio between the social cost of a Nash equilibrium, and the optimal social cost\footnote{If the optimal social cost is zero then the price of anarchy is defined to be infinity, unless every equilibrium has zero social cost, in that case the PoA is defined to be 1. }:
		\begin{equation}
			PoA_{\nu}\left(l,\alpha,t\right) = \frac{\underset{\hat{x}\in EQ_{\nu}(l,\alpha,t)}{\sup}SC_{(l,\alpha,\nu)}(\hat{x})}{SC_{(l,\alpha,\nu)}(x^{*})}
		\end{equation}
	\end{definition}
	Given  a ride sharing technology $\nu$ and a latency function $l$, the social planner wishes to find a toll function  $t=t_{l}$ (that may depend on $l$) that minimize the social cost in the worst case over the population's inconvenience function, ensuring a good outcome even if the population's preferences arbitrarily change. Hence, we interpret the tolled price of anarchy as a bound ensuring the toll is good in the worst case (over preferences):
	\begin{definition}[Tolled price of Anarchy $TPoA_{\nu}(l)$]
		Fix a ride sharing technology $\nu$. The \emph{Tolled Price of Anarchy $TPoA_{\nu}(l)$ for latency $l$} is the price of anarchy of $l$ with the best toll function $t=t_{l}$ for the worst inconvenience function $\alpha$:
		\begin{equation}
			TPoA_{\nu}(l) = \underset{t}{\inf}\text{  }\underset{\alpha}{\sup}\text{  } PoA_{\nu}\left(l,\alpha,t\right)
		\end{equation}
	\end{definition}
	\begin{definition}[Tolled Price of Anarchy $TPoA_{\nu}(\LL)$]
		Fix a ride sharing technology $\nu$. The \emph{Tolled Price of Anarchy of a family (set) of latency functions $\LL=\{l(\cdot)\}$}, $TPoA_{\nu}(\LL)$, is the tolled price of anarchy of the worst latency function in the family:
		\begin{equation}
			TPoA_{\nu}(\LL) = \sup_{l \in \LL} TPoA_{\nu}(l)
		\end{equation}
	\end{definition}

	\section{Ride Sharing Model Theorems}
	\label{appendix:GenResults}
	
	We restate and prove the ride sharing model theorems stated in Section \ref{sec:RSG-results}.
	
	\thmIncDev*
	
	\begin{proof}
		Note that by way of definition of TPoA, it is enough to show that the PoA for a specific tax function $t$ is 1.
		Fix the tax function to be $t(x)=\frac{1-\kappa}{1-\tau}l'(x)$ for every $x$.
		Note that this means
		\begin{equation*}
			(1-\tau)\cdot t(\kappa(x)) =
			(1-\tau)\cdot \frac{1-\kappa}{1-\tau}\cdot l'(\kappa(x))
			=  (1-\kappa)l'(\kappa(x))
		\end{equation*}
		Substitute this in Equation (\ref{eqn:GeneralEqCond}) and get that $\hat{x}$ is an equilibrium if and only if 
		\begin{equation}\label{eqn:GenEqCond_thrm1}
			\alpha(\hat{x}^{-}) \ge (1-\kappa)l'(\kappa(\hat{x})) \ge \alpha(\hat{x}^{+})
		\end{equation}
		
		Since $l'$ is a non-decreasing function and $\kappa(\cdot)$ is a linear function in $x$, $(1-\kappa)l'(\kappa(x))$ is a non-decreasing function while $\alpha$ is a non-increasing function, which means there exists a single equilibrium point (or interval $(x_L,x_H)$ for which $\alpha(x) = (1-\kappa)l'(\kappa(x))$ for every $x\in (x_L,x_H)$). 
		In case $(1-\kappa)l'(\kappa(x))$ and $\alpha(x)$ do not intersect, this point will be either $\hat{x}=0$ or $\hat{x}=1$.\\
		We now claim that every point $\hat{x}$ satisfying Equation (\ref{eqn:GenEqCond_thrm1}) is also a socially optimal point: 
		From Equation (\ref{eqn:GeneralSocialCost}) we get that $SC(x) = l(\kappa) + \int_{0}^{x}(1-\kappa)l'(\kappa(z))dz + \int_{x}^{1}\alpha'(z)dz$. Combining this with the fact that $\alpha$ is non-increasing and that $(1-\kappa)l'(\kappa(x))$ is non-decreasing, we get that for every $x<\hat{x}$
		\begin{gather*}
			\alpha(x)\ge\alpha(\hat{x})\ge (1-\kappa)l'(\kappa(x))\\
			\Downarrow\\
			SC(x)-SC(\hat{x}) = \int_{x}^{\hat{x}}(\alpha(z)-(1-\kappa)l'(\kappa(z)))dz \ge 0
		\end{gather*}
		and for every $x>\hat{x}$
		\begin{gather*}
			(1-\kappa)l'(\kappa(x))\ge (1-\kappa)l'(\kappa(\hat{x}))\ge\alpha(x)\\
			\Downarrow\\
			SC(x)-SC(\hat{x}) = \int_{\hat{x}}^{x}((1-\kappa)l'(\kappa(z))-\alpha(z))dz \ge 0
		\end{gather*}
		meaning that the social cost of any $x\ne\hat{x}$ can only increase over that of $\hat{x}$. Hence, for the  optimal social cost $SC(x^{*})$ we have 
		\begin{equation*}
			SC(\hat{x})=SC_{\nu}(x^{*})
		\end{equation*}
		which in turn means 
		\begin{equation*}
			TPoA(l) = \frac{SC(\hat{x})}{SC(x^{*})}=1
		\end{equation*}
	\end{proof}
	
	\thmAppInc*
	
	\begin{proof}
		By way of definition of TPoA, it is enough to show that the PoA for a given tax function $t$ is $\gamma$. 
		Denote the integral function of $L'$ by $L$ and uniquely define $L$ by setting $L(\kappa)=l(\kappa)$. That is, $L(\kappa(x))= l(\kappa) + \int_{0}^{x}(1-\kappa)L(\kappa(z))dz$. Let us look at two games, one with externality function $l$ and the other with $L$, in both of which considering the tax $t(x)=\frac{1-\kappa}{1-\tau}L'(x)$ for every $x$, and the same $\alpha(\cdot)$.
		Denote by $x^{*}_{L}$ a social optimal point for $G_{L} = G_{\nu}(L,\alpha,L')$ and by $x^{*}_{l}$ a social optimal point for $G_{l} = G_{\nu}(l,\alpha,L')$.\\
		Since $x^{*}_{L}$ is an  optimal point in $G_{L}$, We know that
		\begin{equation*}
			\begin{split}
				SC_{G_{L}}(x^{*}_{L}) &= L(\kappa) + \int_{0}^{x^{*}_{L}}(1-\kappa)L'(\kappa(z))dz + \int_{x^{*}_{L}}^{1}\alpha(z)dz \\
				&\le L(\kappa) + \int_{0}^{x^{*}_{l}}(1-\kappa)L'(\kappa(z))dz + \int_{x^{*}_{l}}^{1}\alpha(z)dz = SC_{G_{L}}(x^{*}_{l})
			\end{split}
		\end{equation*}
		Since $l$ is $\gamma$-approximately increasing with $L$, and since $\gamma\geq 1$, keeping in mind that $\kappa(x)=(1-\kappa)x + \kappa$, it holds that $l'(\kappa(x))\le \gamma \cdot L'(\kappa(x))$ for every $0 \le x \le 1$,  and thus
		\begin{equation*}
			\begin{split}
				SC{_{G_{L}}}(x^{*}_{L}) &= L(\kappa) + \int_{0}^{x^{*}_{L}}(1-\kappa)L'(\kappa(z))dz + \int_{x^{*}_{L}}^{1}\alpha(z)dz\\
				&\ge L(\kappa) + \frac{1}{\gamma}\int_{0}^{x^{*}_{L}}(1-\kappa)l'(\kappa(z))dz + \int_{x^{*}_{L}}^{1}\alpha(z)dz \\
				&= l(\kappa) + \frac{1}{\gamma}\int_{0}^{x^{*}_{L}}(1-\kappa)l'(\kappa(z))dz + \int_{x^{*}_{L}}^{1}\alpha(z)dz \\
				&\ge \frac{1}{\gamma}\left(l(\kappa) + \int_{0}^{x^{*}_{L}}(1-\kappa)l'(\kappa(z))dz + \int_{x^{*}_{L}}^{1}\alpha(z)dz\right) = \frac{1}{\gamma}SC{_{G_{l}}}(x^{*}_{L})\\
			\end{split}
		\end{equation*}
		A $\gamma$-approximately increasing $l'$ with $L'$ also means that $ l'(\kappa(x))\ge L'(\kappa(x))$, and we get that:
		\begin{equation*}
			SC{_{G_{L}}}(x^{*}_{l}) = L(\kappa) + \int_{0}^{x^{*}_{l}}(1-\kappa)L'(\kappa(z)dz + \int_{x^{*}_{l}}^{1}\alpha(z)dz \le l(\kappa) + \int_{0}^{x^{*}_{l}}(1-\kappa)l'(\kappa(z))dz + \int_{x^{*}_{l}}^{1}\alpha(z)dz = SC{_{G_{l}}}(x^{*}_{l})
		\end{equation*}
		
		We now note that if $L'$ is non-decreasing, $x^{*}_{L}$ is an equilibrium point in $G_{l}$, and that if $G_{l}$ has more than one equilibrium point, each such point has the same social cost as $x^{*}_{L}$. 
		Combining it all together, we get that for $\gamma$-approximately latency function derivative $l'$:
		\begin{equation*}
			\forall\hat{x}\in EQ_{(l,\alpha,L')}:\quad SC_{G_{l}}(\hat{x}) = SC_{G_{l}}(x^{*}_{L}) \le \gamma\cdot SC_{G_{L}}(x^{*}_{L}) \le \gamma\cdot SC_{G_{L}}(x^{*}_{l}) \le \gamma\cdot SC_{G_{l}}(x^{*}_{l})
		\end{equation*}
		Meaning
		\begin{equation*}
			PoA(l,\alpha,L') = \gamma
		\end{equation*}
		which implies that 
		\begin{equation*}
			TPoA(l) \le \gamma
		\end{equation*}
	\end{proof}
	
	\thmUppBnd*
	
	\begin{proof}
		By way of definition of TPoA, it is enough to show that the PoA for a given toll function $t$ is no greater than $\sqrt{H}$. let us look at $t(x)=\frac{1-\kappa}{1-\tau}\cdot\sqrt{H}\cdot\inf_{0<x\le1}l'(\kappa(x))$.
		We fix some socially optimal point $x^*$ and prove the claim for any equilibrium point $\hat{x}$, depending which of the two is larger (note that as the case of $x^{*}=\hat{x}$ is trivial, we omit it from the proof):
		
		\subsubsection*{First Case : $x^{*}<\hat{x}$}
		In this case, it holds that:
		\begin{gather*}
			SC(x^{*}) = l(\kappa) + \int_{0}^{x^{*}} (1-\kappa)l'(\kappa(z))dz + \int_{x^{*}}^{\hat{x}} \alpha(z)dz + \int_{\hat{x}}^{1}\alpha(z)dz\\
			SC(\hat{x}) = l(\kappa) + \int_{0}^{x^{*}} (1-\kappa)l'(\kappa(z))dz + \int_{x^{*}}^{\hat{x}} (1-\kappa)l'(\kappa(z))dz + \int_{\hat{x}}^{1}\alpha(z)dz\\
		\end{gather*}
		and using the fact that for every $x>y$ and for every $a>0$, it holds that $\frac{x+a}{y+a}\le\frac{x}{y}$, we get:
		\begin{equation*}
			\begin{split}
				PoA &= \frac{SC(\hat{x})}{SC(x^{*})} \le \frac{\int_{x^{*}}^{\hat{x}} (1-\kappa)\cdot l'(\kappa(z))dz}{\int_{x^{*}}^{\hat{x}} \alpha(z)dz}
				\le \frac{\int_{x^{*}}^{\hat{x}} (1-\kappa)\cdot\sup_{0<x\le1}l'(\kappa(x))dz}{\int_{x^{*}}^{\hat{x}} \alpha(\hat{x}^{-})dx} \\
				&\le\frac{(1-\kappa)\cdot\sup_{0<x\le1}l'(\kappa(x))[\hat{x}-x^{*}]}{\alpha(\hat{x}^{-})[\hat{x}-x^{*}]} = \frac{(1-\kappa)\cdot\sup_{0<x\le1}l'(\kappa(x))}{\alpha(\hat{x}^{-})}\\
				&\le\frac{(1-\kappa)\cdot\sup_{0<x\le1}l'(\kappa(x))}{[1-\tau]t(\kappa(\hat{x}))} =\frac{(1-\kappa)\cdot\sup_{0<x\le1}l'(\kappa(x))}{(1-\kappa)\cdot\sqrt{H}\cdot\inf_{0<x\le1}l'(\kappa(x))} =\frac{H}{\sqrt{H}}=\sqrt{H}\\
			\end{split}
		\end{equation*}
		
		by the fact that for any equilibrium $\hat{x}$ it holds that $\alpha(\hat{x}^{-}) \ge  [1-\tau]t(\kappa(\hat{x}))$ by Equation (\ref{eqn:GeneralEqCond}).
		
		\subsubsection*{Second Case: $x^{*}>\hat{x}$}
		In this case, it holds that:
		\begin{gather*}
			SC(x^{*}) = l(\kappa) + \int_{0}^{\hat{x}} (1-\kappa)l'(\kappa(z))dz + \int_{\hat{x}}^{x^{*}} (1-\kappa)l'(\kappa(z))dz + \int_{x^{*}}^{1}\alpha(z)dz\\
			SC(\hat{x}) = l(\kappa) + \int_{0}^{\hat{x}} (1-\kappa)l'(\kappa(z))dz + \int_{\hat{x}}^{x^{*}} \alpha(z)dz + \int_{x^{*}}^{1}\alpha(z)dz\\
		\end{gather*}
		and using the fact that for every $x>y$ and for every $a>0$, it holds that $\frac{x+a}{y+a}\le\frac{x}{y}$, we get:
		
		\begin{equation*}
			\begin{split}
				PoA = \frac{SC(\hat{x})}{SC(x^{*})} &\le \frac{\int_{\hat{x}}^{x^{*}} \alpha(z)dz}{\int_{\hat{x}}^{x^{*}} (1-\kappa)\cdot l'(\kappa(z))dz} \le \frac{\int_{\hat{x}}^{x^{*}} \alpha(\hat{x}^{+})dx}{\int_{\hat{x}}^{x^{*}} (1-\kappa)\cdot\inf_{0<x\le1}l'(\kappa(x))dz}
				\\
				&=\frac{\alpha(\hat{x}^{+})[x^{*}-\hat{x}]}{(1-\kappa)\cdot\inf_{0<x\le1}l'(\kappa(x))[x^{*}-\hat{x}]} =\frac{\alpha(\hat{x}^{+})}{(1-\kappa)\cdot\inf_{0<x\le1}l'(\kappa(x))}\\
				&=\frac{[1-\tau]t(\kappa(\hat{x}))}{(1-\kappa)\cdot\inf_{0<x\le1}l'(\kappa(x))} \le\frac{\sqrt{H}\cdot(1-\kappa)\cdot\inf_{0<x\le1}l'(\kappa(x))}{(1-\kappa)\cdot\inf_{0<x\le1}l'(\kappa(x))}\le\sqrt{H}\\
			\end{split}
		\end{equation*}
		by the fact that for any equilibrium $\hat{x}$ it holds that $\alpha(\hat{x}^{+}) \le [1-\tau]t(\kappa(\hat{x}))$ by Equation (\ref{eqn:GeneralEqCond}).
		
	\end{proof}
	
	\thmLowBnd*
	
	\begin{proof}
		Given $H>1$, let us define $\Delta=\frac{1}{H}$, choose some $\epsilon\ll 1$ and look at the following $l'(x)$:
		\begin{equation*}
			l'(x) = 
			\begin{cases}
				0 & \quad 0 \le x \le \kappa \\
				\frac{1}{1-\kappa} & \quad \kappa < x \le (1-\kappa)2\epsilon+\kappa \\
				\frac{1}{1-\kappa}\Delta & \quad (1-\kappa)2\epsilon+\kappa < x \le 1 \\
			\end{cases}
		\end{equation*}
		By manipulating the given $l'$, we get:
		\begin{equation*}
			(1-\kappa)l'(\kappa(x)) = 
			\begin{cases}
				1 & \quad 0 \le x \le 2\epsilon \\
				\Delta & \quad 2\epsilon < x  \le 1 \\
			\end{cases}
		\end{equation*}
		Clearly, $l'$ upholds the conditions of the theorem. Note that by defining $l'$ we define $l$ up to a constant. Let us choose $l(0)=0$ (meaning also $l(\kappa)=0$), which uniquely define $l$. Clearly, $l$ is a valid latency function.
		
		By way of definition of TPoA, it is enough to show that for every toll function $t$ there exist an individual cost function $\alpha$ for which the PoA is greater than $\sqrt{H}$. In order to do that, we will split the proof to two cases, building $\alpha$ according to the value of $t$ at the point $x=\epsilon$.
		
		\subsubsection*{First Case: $[1-\tau]t(\kappa(\epsilon)) \ge \sqrt{\Delta} $}
		If $[1-\tau]t(\kappa(\epsilon)) \ge \sqrt{\Delta} $, we will define the following marginal inconvenience function $\alpha$:
		\begin{equation*}
			\alpha(x) = 
			\begin{cases}
				[1-\tau]t(\kappa(\epsilon)) & \quad x< 1 \\
				0 & \quad x=1 \\
			\end{cases}
		\end{equation*}	
		
		As $\alpha$ is continuous at $\epsilon$ and  $\alpha(\epsilon)= [1-\tau]t(\kappa(\epsilon))$, the point $\hat{x}=\epsilon$ is an equilibrium (although not necessarily the only one). Using the fact that $l(\kappa)$ we can use $l(\kappa(x)) = \int_{0}^{x}(1-\kappa)l'(\kappa(z))dz$, and get
		\begin{equation*}
			SC_{\nu}(x^{*})\le SC_{\nu}(1) = 0 + 2\epsilon + \Big(1-2\epsilon\Big)\cdot\Delta = \Delta +2\epsilon\Big(1-\Delta\Big) < \Delta + 2 \epsilon
		\end{equation*}
		as well as
		\begin{equation*}
			\begin{split}
				\max_{\hat{y}\in EQ_{\nu}}SC_{\nu}(\hat{y})&\ge SC_{\nu}(\hat{x})=SC_{\nu}\left(\epsilon\right)\\
				&= l(\kappa(\epsilon)) + \int_{\epsilon}^{1} \alpha(z) dz = \epsilon + \left(1-\epsilon\right)\cdot [1-\tau]t(\kappa(\epsilon)) \\
				&\ge \epsilon + \left(1-\epsilon\right)\cdot \sqrt{\Delta} = \sqrt{\Delta} + \epsilon \left(1-\epsilon\right)\geq \sqrt{\Delta}\\			
			\end{split}
		\end{equation*}
		Thus, the tolled price of anarchy satisfies:
		\begin{equation*}
			TPoA_{\nu}(l)\ge \frac{SC(\epsilon)}{SC(1)} \geq  
			\frac{\sqrt{\Delta}}{\Delta+2\epsilon}\overset{\epsilon\rightarrow0}{\longrightarrow} \frac{1}{\sqrt{\Delta}} = \sqrt{H}
		\end{equation*}
		
		\subsubsection*{Second case: $0 \le [1-\tau]t(\kappa(\epsilon)) < \sqrt{\Delta} $}
		If $0 \le [1-\tau]t(\kappa(\epsilon)) < \sqrt{\Delta} $, then define $\alpha$ the following way:
		\begin{equation*}
			\alpha(x) = 
			\begin{cases}
				[1-\tau]t(\kappa(\epsilon)) & \quad x\le \epsilon + \epsilon^{2} \\
				0 & \quad \epsilon + \epsilon^{2} < x \le 1 \\
			\end{cases}
		\end{equation*}
		
		By definition, the point $\hat{x}=\epsilon$ is an equilibrium (although not necessarily the only one). Using the fact that $l(\kappa)$ we can use $l(\kappa(x)) = \int_{0}^{x}(1-\kappa)l'(\kappa(z))dz$ and get
		
		\begin{gather*}
			SC_{\nu}(x^{*})\le SC_{\nu}(0) = \Big(\epsilon+\epsilon^{2}\Big)\cdot [1-\tau]t(\kappa(\epsilon)) \le \Big(\epsilon+\epsilon^{2}\Big)\cdot \sqrt{\Delta} \\
			\max_{\hat{y}\in EQ}SC_{\nu}(\hat{y})\ge SC_{\nu}(\hat{x}) = SC_{\nu}\left(\epsilon\right) = \epsilon +\epsilon^{2}\cdot [1-\tau]t(\kappa(\epsilon)) \ge \epsilon\\
		\end{gather*}	
		
		and the tolled price of anarchy upholds:
		\begin{equation*}
			TPoA_{\nu}(l) \ge \frac{SC(\epsilon)}{SC(0)} \geq \frac{\epsilon}{(\epsilon+\epsilon^{2}) \cdot \sqrt{\Delta}}
			= \frac{1}{(1+\epsilon)\cdot t(\epsilon)}\geq \frac{1}{(1+\epsilon)\sqrt{\Delta}} \overset{\epsilon\rightarrow0}{\longrightarrow}  \frac{1}{\sqrt{\Delta}}=\sqrt{H}
		\end{equation*}
	\end{proof}
\end{document}